\newtheorem{theorem}{Theorem}[section]
\newtheorem{prop}[theorem]{Proposition}
\newtheorem{cor}[theorem]{Corollary}
\newcommand{\Ref}[1]{(\ref{#1})}
\newcommand{\ee}[1]{{\rm e}^{#1}}
\newcommand{\ii}{{\rm i}}
\newcommand{\tet}{\theta}
\newcommand{\Z}{{\mathbb Z}}
\newcommand{\cV}{{\mathcal  V}}
\newcommand{\cW}{{\mathcal  W}}
\newcommand{\cH}{{\mathcal  H}}
\newcommand{\cE}{{\mathcal  E}}
\newcommand{\cN}{{\mathcal  N}}
\newcommand{\cQ}{{\mathcal  Q}}
\newcommand{\cC}{{\mathcal  C}}
\newcommand{\tx}{\tilde{x}}
\newcommand{\tN}{\tilde{N}} 
\newcommand{\ty}{\tilde{y}}
\newcommand{\tM}{\tilde{M}} 
\newcommand{\tz}{\tilde{z}} 
\title{\Large{\bf{Source identity and kernel functions for Inozemtsev-type systems }}}
\date{\vspace{-1.0cm}\small \today\vspace{0.2cm}}
\author[1]{Edwin Langmann\footnote{langmann@kth.se}}
\author[2]{Kouichi Takemura\footnote{takemura@math.chuo-u.ac.jp}}
\affil[1]{Department of Theoretical Physics, Royal Institute of Technology KTH, SE-106 91 Stockholm, Sweden}
\affil[2]{Department of Mathematics, Chuo University, 1-13-27 Kasuga, Bunkyo-ku Tokyo 112-8551, Japan} 
\begin{document}

\maketitle

\begin{abstract}
The Inozemtsev  Hamiltonian is an elliptic generalization of the differential operator defining the BC$_N$ trigonometric quantum Calogero-Sutherland model, and its eigenvalue equation is a natural many-variable generalization of the Heun differential equation. 
We present kernel functions for  Inozemtsev Hamiltonians and Chalykh-Feigin-Veselov-Sergeev-type deformations thereof. Our main result is a solution of a heat-type equation for a generalized Inozemtsev Hamiltonian  which is the source for all these kernel functions. Applications are given, including a derivation of simple exact eigenfunctions and eigenvalues for the 
 Inozemtsev Hamiltonian.
\medskip

\noindent {MSC-class: 81Q05, 16R60\\
Keywords: quantum Calogero-Sutherland-Moser models; Heun equation; Inozemtsev model; kernel functions}
\end{abstract}

\section{Introduction}
Integrable models in quantum mechanics are closely related to the mathematical theory of special functions. A famous example are Calogero-Moser-Sutherland models which describe an arbitrary number of identical particles moving in one dimension under the influence of particular one- and two-body potentials \cite{C,M,Su,OP}. The Hamiltonians of these models are differential operators that have eigenfunctions providing natural many-body generalizations of the classical orthogonal polynomials. For example, the so-called BC$_N$ trigonometric Calogero-Sutherland model has energy eigenfunctions given by many-variable Jacobi polynomials; see e.g.\ \cite{BF,HL}. 

The BC$_N$ trigonometric Calogero-Sutherland model has an elliptic generalization defined by the Hamiltonian 
\begin{equation} 
\label{I} 
\begin{split}
H_N = \sum_{j=1}^N\Bigl( -\frac{\partial^2}{\partial x_j^2} + \sum_{\nu =0}^3 g_\nu(g_\nu-1)\wp(x_j+\omega_\nu) \Bigr) \qquad \qquad \qquad \qquad \\+ \sum_{1\leq j<k\leq N} 2\lambda(\lambda-1) \left\{ \wp(x_j-x_k) + \wp(x_j +x_k) \right\} \\
\end{split}
\end{equation} 
with $\wp(x)$ the usual Weierstrass elliptic function with periods $2\omega_1$ and $2\omega_3$  and where we use the notation 
\begin{equation*}
\omega_0=0,\quad \omega_2=-\omega_1-\omega_3 ,
\end{equation*} 
here and in the following.\footnote{Our notation for elliptic functions is as by Whittaker and Watson \cite{WW}, except that $\omega_2$ is denoted by us as $\omega_3$; for the convenience of the reader we collect the definitions of the functions we use in Appendix~\ref{appA1}.} This Hamiltonian depends on the particle number $N$ and five coupling parameters $g_0$, $g_1$, $g_2$, $g_3$, $\lambda$. 
It defines the natural quantum-analogue of a classical Liouville integrable system first presented by Inozemtsev in \cite{I}, and we refer to this quantum-many body system as the {\em Inozemtsev model}. 
The integrability of the Inozemtsev model was partially established by van Diejen \cite{vD}, and Oshima \cite{O} described its commuting operators (higher-order Hamiltonians) completely. In the following we sometimes write $H_N\left(x;\{g_\nu\}_{\nu=0}^3,\lambda\right)$ for the Inozemtsev Hamiltonian in \Ref{I}, to indicate the argument $x=(x_1,\ldots,x_N)$ and the coupling parameters.  Note that this differential operator allows for a quantum mechanical interpretation only if one assumes that $\omega_1>0$, $-\ii\omega_3>0$, and that all coupling parameters  are real. However, many of our results hold true with lesser restrictions.  

The Inozemtsev model is interesting already in the one-variable case: the eigenvalue equation $H_1\psi(x)=E\psi(x)$ of the differential operator 
\begin{equation}
\label{Heun} 
H_1  = -\frac{\partial^2}{\partial x^2} + \sum_{\nu=0}^3 g_\nu(g_\nu-1)\wp(x + \omega_\nu)
\end{equation} 
is equivalent to the {\em Heun differential equation}, which is a standard form of a second-order Fuchsian differential equation with four singularities and a topic of current research in special function theory; see \cite{Ron,SL,TakID}. The Heun differential equation, and the differential equations of its confluent type, appear in several physics contexts, including quantum mechanics, general relativity, models of crystal imperfections \cite{SL}, and the AdS/CFT correspondence \cite{KSY}. In this paper we present generalizations of two known functional identities involving the Heun differential operator in \Ref{Heun}  and functions that are products of powers of Jacobi theta functions $\tet_{\nu +1} (x)$, $\nu =0,1,2,3$ (see Appendix~\ref{appA1} for precise definitions).  The first known identity is as follows: The function  
\begin{equation}
\label{Psi}
\Psi_1(x) = \prod_{\nu=0}^3\tet_{\nu +1} (x)^{g_\nu} 
\end{equation} 
obeys the equation
\begin{equation} 
\label{nsHeunPsi}
\left\{ 2(g_0 +g_1 +g_2 +g_3)  \frac{\partial}{\partial\beta} +H_1-C_1 \right\}\Psi_1(x) = 0 
\end{equation}
with a known constant $C_1$ (see \Ref{CN}), where $\beta= 2 \omega_1 \omega_3/(\pi\ii)$ ($\omega_1$ is fixed). This {\em non-stationary Heun equation} appears in several physics contexts, including the Wess-Zumino-Witten model \cite{EK}, the eight-vertex model \cite{BM} and Liouville field theory \cite{FLNO}. The second known identity provides a {\em kernel function} for a pair of Heun differential operators (recall that a function $F(x,y)$ of two variables $x$ and $y$ is called a kernel function of two differential operators $D(x)$ and $\tilde D(y)$ if  $[D(x)-\tilde D(y)-c]F(x,y)=0$ for some constant $c$): It is known that the function
\begin{equation} 
\label{F}
\Psi_{1,1}(x,y) = \frac{\prod_{\nu=0}^3\theta_{\nu+1}(x)^{g_\nu} \theta_{\nu+1}(y)^{\tilde g_\nu}}{\theta_1(x-y)^\lambda\theta_1(x+y)^\lambda}
\end{equation} 
is a kernel function of the Heun differential operators $H(x;\{g_\nu\}_{\nu=0}^3)$ and $H(y;\{\tilde g_\nu\}_{\nu=0}^3)$ provided that $\tilde g_\nu=\lambda-g_\nu$ and $\lambda=(g_0+g_1+g_2+g_3)/2$ \cite{Rui3,Tak3}.  In fact, we find that, for arbitrary $\lambda$, the function in \Ref{F} is a {\em generalized kernel function} of these differential operators in the following sense, 
\begin{equation} 
\label{kernel}
\left\{  2( g_0+g_1+g_2+g_3 -2\lambda)  \frac{\partial}{\partial\beta} +H_1(x;\{g_\nu\}_{\nu=0}^3) - H_1(y;\{\tilde g_\nu\}_{\nu=0}^3) +C_{1,1}  \right\}\Psi_{1,1}(x,y) =0 
\end{equation} 
with a known constant  $C_{1,1}$ (see \Ref{CNM}). Moreover, our results include another generalized kernel function for two Heun-type differential operators: the function
\begin{equation}
\label{F2} 
\tilde\Psi_1(x,y) = \theta_1(x-y)\theta_1(x+y) \prod_{\nu=0}^3\theta_{\nu+1}(x)^{g_\nu} \theta_{\nu+1}(y)^{\tilde g_\nu'} 
\end{equation} 
obeys
\begin{equation} 
\label{kernel2} 
\left\{ 2( g_0+g_1+g_2+g_3 +2)  \frac{\partial}{\partial\beta} +H_1(x;\{g_\nu\}_{\nu=0}^3) +\lambda H_1(y;\{\tilde g'_\nu\}_{\nu=0}^3) +\tilde C_{1,1}  \right\}\tilde \Psi_{1,1}(x,y) =0
\end{equation} 
provided that $\tilde g_\nu'=(2g_\nu +1-\lambda)/(2\lambda)$, for some known constant $\tilde C_{1,1}$ (see \Ref{tCNM}) and arbitrary $\lambda\neq 0$. 

In this paper we present and prove many-variable generalizations of the identities in the previous paragraph: we obtain a generalizations of the non-stationary Heun equation in  \Ref{Psi}--\Ref{nsHeunPsi}, and of the two kinds of generalized kernel function identities in \Ref{F}-\Ref{kernel} and \Ref{F2}--\Ref{kernel2}, to Inozemtsev Hamiltonians, for arbitrary particle numbers; see Corollaries~\ref{cor1}, \ref{cor2} and \ref{cor3}, respectively. 
Note that, in the latter two cases, the two Inozemtsev Hamiltonians 
 can have different particle numbers $N$ and $M$. The most general kernel function identity we obtain is for a pair of differential operators 
\begin{equation} 
\label{dI} 
\begin{split} 
H_{N,\tN}(x,\tx;\{g_\nu\}_{\nu=0}^3,\lambda) =& H_N(x;\{g_\nu\}_{\nu=0}^3,\lambda) -\lambda H_{\tN}(\tx;\{(\lambda+1-2g_{\nu})/(2\lambda)\}_{\nu=0}^3,1/\lambda)\\ & +\sum_{j=1}^N\sum_{k=1}^{\tN}2(1-\lambda)\left\{ \wp(x_j-\tx_k)+\wp(x_j+\tx_k)\right\} 
\end{split} 
\end{equation} 
and $H_{M,\tM}(y,\ty;\{\lambda -g_\nu\}_{\nu=1}^3,\lambda)$, with $H_N(x;\{g_\nu\}_{\nu=0}^3,\lambda)$ in \Ref{I}, for arbitrary particle numbers $N$, $\tN$, $M$, $\tM$; see Corollary~\ref{cor4}. All our results are special cases of a many-variable generalization of the non-stationary Heun equation in \Ref{Psi}--\Ref{nsHeunPsi} to a generalized Inozemtsev-type differential operator where all particle mass parameters can be different and where  the interaction strengths depend on these "masses" in a particular way; see Theorem~\ref{thm:main}. We refer to this as {\em source identity} since it is the source of all our other results: the latter are obtained in a simple way as special cases. Remarkably, direct proofs of these special cases are often more complicated than the proof of the source identity. 

It is known that integrable quantum mechanical models of Calogero-Moser-Sutherland type allow for deformations that share many of their beautiful mathematical properties \cite{CFV,Sergeev,SV}, and for the BC$_N$ trigonometric Calogero-Sutherland system this deformation corresponds to the trigonometric limit of the differential operator in \Ref{dI}. It thus is natural to conjecture that the generalization of this deformation to the Inozemtsev model is given by \Ref{dI}. 

The differential operator defining the Sutherland model (= $A_{N-1}$ trigonometric Calogero-Sutherland model) has a well-known kernel functions which can be used to construct the eigenfunctions and eigenvalues of the Sutherland model \cite{EL0}, and this approach can be generalized to the elliptic case \cite{EL} and to all quantum Calogero-Moser-Sutherland models associated with classical orthogonal polynomials  \cite{HL}. Moreover, the latter results allow for a natural generalization to the deformed models \cite{HL}. Our results in the present paper provide the starting point to generalize these results to the Inozemtsev mode and its deformation in \Ref{dI}. 

The first example of a source identity was found for the Sutherland model by Sen \cite{Sen}, and that this identities can be used to obtain kernel functions for the Sutherland model and its deformations was pointed out in \cite{EL1}. Source identities for all quantum Calogero-Moser-Sutherland models where obtained and used to derive kernel functions in \cite{HL}. A source identity allowing to derive kernel functions for the elliptic generalizations of the Sutherland model and their deformations was presented in \cite{EL2}. The present paper generalizes, to the elliptic case, results previously obtained in \cite{HL}. 

We mention four further topics for future research suggested by the results in the present paper. First, kernel functions of Calogero-Sutherland models can be regarded as a natural quantum analogue of B\"acklund transformations found by Wojciechowski \cite{Woj}; see also \cite{KS}. This suggests that our results can provide B\"acklund transformations for the  classical version of the Inozemtsev model. 
Second, kernel functions for the Sutherland model have been used to construct $Q$-operators that allow to derive integral representations for the eigenfunctions of this model \cite{KMS}. This suggests that it is possible to extend Sklyanin's separation-of-variable approach \cite{Skly} to the Inozemtsev model using our results.
Third, as pointed in \cite{GGR} (see also \cite{T}), some special cases of the Inozemtsev model are quasi-exactly solvable in the sense that the computation of a finite number, $m$, of eigenfunctions can be reduced to diagonalizing a $m\times m$ matrix (note that the Hamiltonian in Eqs.\ (2)--(4) in \cite{GGR} is identical with the  Inozemtsev Hamiltonian as in \Ref{I} for $\lambda=a$, $g_0=g_1=g_3=b$, $g_2=-3b-2[m+\lambda(N-1)]$). We find that these cases are special with regard to our kernel functions (since, by applying the result in Corollary~\ref{cor3} to the parameters above and $(N,M)=(N,m)$, we obtain $A_{N,M}=0$). This suggests that our results might shed new light, and possibly allow to extend, the results in \cite{GGR}. 
Fourth,  while many kernel functions for Ruijsenaars' relativistic generalizations of Calogero-Sutherland-type systems \cite{R}  are known \cite{KNS}, the results in \cite{HL} and the present paper suggest that many more such relativistic kernel functions should exist. We believe that a good strategy to find all such kernel identities would be to find relativistic generalizations of the source identities obtained in \cite{HL,EL2} and the present paper.

\section{Main result} 
As discussed, our main result is a heat-type equation for a Schr\"odinger-type differential operator. Our notations for elliptic functions is defined in Appendix~\ref{appA}. 

\begin{theorem}{ \bf (Source Identity):} 
\label{thm:main} 
For $\cN$ a positive integer, $\lambda$, $d_\nu$ ($\nu=0,1,2,3$) and $m_J\neq 0$ ($J=1,2,\ldots,\cN$) complex constants, and $X_J$ ($J=1,2,\ldots,\cN$) complex variables, let 
\begin{equation}
\label{Phi0} 
\Phi_0(X) =\left( \prod_{J=1}^{\cN} \prod_{\nu=0}^3 \theta_{\nu+1} (X_J)^{g_{\nu,J} } \right) \prod_{1\leq J<K\leq\cN} \theta_1(X_J-X_K)^{m_Jm_K\lambda}\theta_1(X_J+X_K)^{m_Jm_K\lambda}
\end{equation} 
and
\begin{equation}
\label{cH} 
\begin{split} 
\cH= \sum_{J=1}^N\frac1{m_J}\left( -\frac{\partial^2}{\partial X_J^2} + \sum_{\nu =0}^3 g_{\nu,J}(g_{\nu,J}-1) \wp(X_J+\omega_\nu ) \right) \qquad \qquad \qquad \qquad  \\ + \sum_{1\leq J<K\leq\cN} \gamma_{J,K}\left\{ \wp(X_J - X_K) +\wp(X_J+X_K)\right\} 
\end{split} 
\end{equation} 
with
\begin{equation}
\label{gammaJK} 
\gamma_{J,K} =\lambda(m_J+m_K)(\lambda m_Jm_K-1) 
\end{equation}
\begin{equation}
\label{gnuJ} 
g_{\nu,J} =m_J d_\nu +\frac{\lambda}{2}m_J ^2 . 
\end{equation} 
Then
\begin{equation}
\label{main result}
\Bigl\{ (4\lambda |m|+2|d|) \frac{\partial}{\partial\beta} +\cH -\cE_0\Bigr\} \Phi_0(X)=0 
\end{equation} 
with
\begin{equation}
\label{cE0} 
\begin{split} 
\cE_0= & (2 \lambda |m|+|d|) \left\{\cN- \lambda  (|m|^2 +|m^2| ) - |m| |d| \right\}\frac{\eta _1}{\omega _1}  \\
& + |m| \{ (d_0 d_1+ d_2d_3 )e_1 +(d_0 d_2+ d_1 d_3 )e_2 +(d_0 d_3+ d_1 d_2 )e_3 \} 
\end{split} 
\end{equation}
\begin{equation}
\label{nm} 
|d|= \sum_{\nu=0}^3 d_\nu ,\quad |m|= \sum_{J=1}^{\cN} m_J,\quad |m^2|= \sum_{J=1}^{\cN} m_J^2. 
\end{equation} 
\end{theorem}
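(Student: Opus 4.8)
The plan is to establish \Ref{main result} by a Liouville argument. I would set
\[ G(X)\define \frac1{\Phi_0(X)}\Bigl\{(4\lambda|m|+2|d|)\frac{\partial}{\partial\beta}+\cH-\cE_0\Bigr\}\Phi_0(X) \]
and show that $G$ is identically zero. Writing $\ell_\nu(x)\define\theta_{\nu+1}'(x)/\theta_{\nu+1}(x)$, one has $\partial_{X_J}\log\Phi_0=\sum_\nu g_{\nu,J}\ell_\nu(X_J)+\lambda m_J\sum_{L\neq J}m_L\{\ell_0(X_J-X_L)+\ell_0(X_J+X_L)\}$, so $\cH\Phi_0/\Phi_0$ is a sum of squared and second logarithmic derivatives plus the $\wp$-potentials. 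The three functional inputs I would use are the heat equation $\partial_\beta\theta_{\nu+1}(x)=\kappa\,\partial_x^2\theta_{\nu+1}(x)$ (with a common constant $\kappa$, turning the $\partial_\beta$-term into $\kappa[\partial_x^2\log\theta_{\nu+1}+\ell_\nu^2]$), the identity $\partial_x^2\log\theta_{\nu+1}(x)=-\wp(x+\omega_\nu)-\eta_1/\omega_1$ valid for all $\nu$, and the duplication formula $2\ell_0(2x)=\sum_{\nu=0}^3\ell_\nu(x)$. With these, $G$ is a meromorphic function of each $X_J$, and it suffices to verify that it has no poles and is doubly periodic, whence Liouville's theorem forces it to be a constant, which is then read off.

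For the poles at $X_J=\pm X_K$ I would first match the double poles. Near $X_J=X_K$ the singular contributions come from $-\tfrac1{m_J}(\partial_{X_J}\log\Phi_0)^2$, $-\tfrac1{m_K}(\partial_{X_K}\log\Phi_0)^2$, the two second-derivative terms, and the potential $\gamma_{J,K}\wp(X_J-X_K)$, whereas the heat piece $\kappa[\ell_0^2+\ell_0']$ is regular there. The coefficient of $(X_J-X_K)^{-2}$ works out to $\lambda(m_J+m_K)(1-\lambda m_Jm_K)+\gamma_{J,K}$, which vanishes precisely for the choice \Ref{gammaJK}; the locus $X_J=-X_K$ is identical under $X_K\to-X_K$. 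For the simple poles the residue is proportional to $R_J/m_J-R_K/m_K$, where $R_J$ is the regular part of $\partial_{X_J}\log\Phi_0$ at $X_J=X_K$. Using $g_{\nu,J}/m_J=d_\nu+\tfrac\lambda2 m_J$ the $d_\nu$-parts cancel, and the remainder reduces to $\lambda(m_J-m_K)\bigl[\tfrac12\sum_\nu\ell_\nu(X_K)-\ell_0(2X_K)\bigr]$, which is zero by the duplication formula; thus the term $\tfrac\lambda2 m_J^2$ in \Ref{gnuJ} is exactly what kills these residues. The only remaining singularities lie at the half-periods, where the one-body combination $-g_{\nu,J}^2+g_{\nu,J}+g_{\nu,J}(g_{\nu,J}-1)=0$ cancels the double poles for every $g_{\nu,J}$, with no simple poles by parity. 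Since the three-body cross terms $\ell_0(X_J-X_K)\ell_0(X_J-X_L)$ are singular only on these same loci, they are disposed of automatically and no separate three-body identity is needed.

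For periodicity, $X_J\to X_J+2\omega_1$ leaves every $\ell_\nu$ invariant, so $G$ is automatically $2\omega_1$-periodic. Under $X_J\to X_J+2\omega_3$ each first logarithmic derivative shifts by a common additive constant $\Delta$, so $\partial_{X_J}\log\Phi_0$ shifts by $Q_J\Delta$ with $Q_J=\sum_\nu g_{\nu,J}+2\lambda m_J(|m|-m_J)$; here the $\tfrac\lambda2 m_J^2$ terms again conspire, giving $Q_J=m_J(|d|+2\lambda|m|)$ so that $Q_J/m_J$ is independent of $J$. The resulting $X$-dependent anomaly in the squared first derivatives is then an overall multiple of $\sum_J\partial_{X_J}\log\Phi_0$, and I would show it is cancelled by the shift of the heat-equation term exactly when the coefficient of $\partial/\partial\beta$ equals $2(|d|+2\lambda|m|)=4\lambda|m|+2|d|$, as in \Ref{main result}. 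Reconciling $\partial_\beta$, taken at fixed $X$, with a period $2\omega_3$ that itself depends on $\beta$ is where the genuine work lies and is the step most sensitive to the theta-function conventions; this is the part I expect to be the main obstacle.

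It remains to evaluate the constant $G$ by collecting the $X$-independent terms. The $\eta_1/\omega_1$ contributions come from the constant parts of $\partial_x^2\log\theta_{\nu+1}$ and of the heat-generated squares $\ell^2$, summed over both one-body factors and two-body factors (the latter weighted by $\lambda m_Jm_K$, producing the $\lambda(|m|^2+|m^2|)$ combination), with the overall factor $(2\lambda|m|+|d|)$ inherited from the coefficient of $\partial_\beta$. The $e_1,e_2,e_3$ contributions arise solely from the one-body cross terms $\ell_\mu(X_J)\ell_\nu(X_J)$ with $\mu\neq\nu$, which reduce through the half-period values $\wp(\omega_\mu-\omega_\nu)=e_k$; since $\omega_0-\omega_1$ and $\omega_2-\omega_3$ both give $e_1$, etc., and each product carries weight $\propto m_Jd_\mu d_\nu$ summed to $|m|d_\mu d_\nu$, one obtains precisely $(d_0d_1+d_2d_3)e_1+(d_0d_2+d_1d_3)e_2+(d_0d_3+d_1d_2)e_3$ times $|m|$. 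Matching these two groups reproduces \Ref{cE0}; keeping the half-period reduction consistent so that the $e_\nu$-coefficients emerge exactly is the only other point requiring care.
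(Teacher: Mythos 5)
Your Liouville-type strategy is a genuinely different route from the paper's proof: there the operator is factorized as $\tilde\cH=\sum_Jm_J^{-1}\cQ_J^+\cQ_J^-$ with $\cQ_J^-\Phi_0=0$ by construction, and $\tilde\cH$ is converted into $(4\lambda|m|+2|d|)\Phi_0^{-1}\partial_\beta\Phi_0+\cH-\cE_0$ by purely algebraic use of the identity \Ref{eq:zetawp}, the heat equation and the duplication formula, so that no periodicity or pole analysis is ever needed. Several of your key steps are correct and mirror what the paper's identities encode: the double-pole coefficient at $X_J=X_K$ does single out \Ref{gammaJK}; the residue there is proportional to $R_J/m_J-R_K/m_K$ and is killed by the duplication formula precisely because of the $\frac{\lambda}{2}m_J^2$ term in \Ref{gnuJ}; and $Q_J=m_J(|d|+2\lambda|m|)$ correctly ties the coefficient of $\partial_\beta$ to the quasi-periodicity anomaly.

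There is, however, a concrete gap at the half-periods. Writing $\phi_{\nu+1}(x)=\theta_{\nu+1}'(x)/\theta_{\nu+1}(x)$ as in \Ref{phinu}, one has from \Ref{zetanuphinu} the expansion $\phi_{\nu+1}(-\omega_\nu+u)=u^{-1}+c_\nu+O(u)$ with $c_\nu=-\eta_\nu+\eta_1\omega_\nu/\omega_1$, and by the Legendre relation $c_3=-c_2=\pi\ii/(2\omega_1)\neq 0$. Consequently the heat piece $\dot\theta_{\nu+1}(X_J)/\theta_{\nu+1}(X_J)=\frac12\bigl(\phi_{\nu+1}'+\phi_{\nu+1}^2\bigr)$ has a \emph{nonzero} residue $c_\nu$ at $X_J=-\omega_\nu$ for $\nu=2,3$ (unlike at $X_J=X_K$, where $\phi_1(u)$ has no constant term), and the one-body cross terms $\phi_{\nu+1}(X_J)\phi_{\mu+1}(X_J)$ and the pair factors contribute further residues through $\phi_{\mu+1}(-\omega_\nu)$ and $\phi_1(-\omega_\nu\pm X_K)$. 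These are not disposed of ``by parity'': evenness of $G$ about $X_J=0$ gives local evenness about $-\omega_\nu$ only after double (quasi-)periodicity in $X_J$ is already in hand, which inverts the order of your argument. The residues do cancel, but only via $\zeta(\omega_\mu-\omega_\nu)=\eta_\mu-\eta_\nu$ and $\zeta(\omega_\nu+X)+\zeta(\omega_\nu-X)=2\eta_\nu$ conspiring with the coefficient $4\lambda|m|+2|d|$ of $\partial_\beta$ and with $\sum_\mu g_{\mu,J}+2\lambda m_J(|m|-m_J)=m_J(|d|+2\lambda|m|)$ --- a computation of the same weight as your $X_J=X_K$ analysis, not a freebie. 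Beyond this, the two steps you yourself flag as open --- the reconciliation of $\partial_\beta$ at fixed $X$ with the $\beta$-dependent period $2\omega_3$, and the evaluation of the constant so that it equals $\cE_0$ in \Ref{cE0} --- are exactly where the paper's bookkeeping (its identities \Ref{I2}--\Ref{I6} and the reduction of \Ref{cE01}) does the real work; as written, your proposal establishes the structure of the identity but not the two constants $4\lambda|m|+2|d|$ and $\cE_0$ themselves.
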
 

\begin{proof}
Consider the differential operator 
\begin{equation} 
\label{cHQQ}
\tilde{\cH}= \sum_{J} \frac{1}{m_J}\cQ_J^+\cQ_J^-
\end{equation} 
with
\begin{equation} 
\label{cQ} 
\cQ_J ^\pm = \pm\frac{\partial}{\partial X_J} + \cV_J,\quad \cV_J=\frac1{\Phi_0(X)}\frac{\partial\Phi_0(X)}{\partial X_J}.
\end{equation} 
Using identities of elliptic functions collected and proved in Appendix~\ref{appA2} we find, by straightforward computations (details are given in Appendix~\ref{appB}), 
\begin{equation} 
\label{result}
\tilde\cH = (4|m|\lambda+2|d|)\frac{1}{\Phi_0}\frac{\partial}{\partial\beta}\Phi_0 +\cH -\cE_0 .
\end{equation}
By definition, $\cQ^-_J\Phi_0=0$ for all $J$. Thus $\tilde\cH\Phi_0=0$, and \Ref{result}  implies our result in \Ref{main result}. 
\end{proof} 

Under suitable restrictions on parameters (see below), the differential operator $\cH$ in \Ref{cH} has a natural physical interpretation as Hamiltonian describing $\cN$ distinguishable quantum particles with interactions, and the results above provide the exact groundstate and groundstate energy of this Hamiltonian. Namely, if $\omega_1>0$, $-\ii\omega_3>0$, $\lambda>0$, $m_J>0$, $d_0>-\lambda m_J/2$ and $d_1 >-\lambda m_J/2$ for $J=1,2,\ldots,\cN$, $d_\nu$ real for $\nu=2,3$, and $4|m|\lambda+2|d|=0$, then the Hamiltonian in \Ref{cH} defines a unique self-adjoint operator on the Hilbert space $L^2([0,\omega_1]^\cN)$ which has $\Phi_0(X)$ as groundstate and $\cE_0$ as groundstate energy.  (This is true because, under these conditions,  $\cE_0$ and the potential terms in the Hamiltonian $\cH$  are real,  the function $\Phi_0(X)$ is square-integrable,  and the Hilbert space adjoint of $\cQ_J^-$ is equal to the closure of $\cQ_J^+$.   This, \Ref{cHQQ},  \Ref{result},  and the vanishing of $\partial/\partial\beta$-term in \Ref{result} imply that $\sum_J (\cQ^{-}_J)^\dag\cQ^{-}_J/m_J+\cE_0$ defines such a selfadjoint extension $\cH$; see e.g.\ \cite{RS2}). In the rest of this paper the self-adjointness of Inozemtsev-type differential operators will play no role. 

\section{Special cases} 
To state important special cases of our main result we use the following notation
\begin{equation} 
\label{I01} 
\begin{split}
H_N(x;\{g_\nu\}_{\nu=0}^3,\lambda) = & \sum_{j=1}^N\Bigl( -\frac{\partial^2}{\partial x_j^2} + \sum_{\nu =0}^3 g_\nu(g_\nu-1)\wp(x_j+\omega_\nu) \Bigr) \qquad\qquad\\ & + \sum_{1\leq j<k\leq N} 2\lambda(\lambda-1) \left\{ \wp(x_j-x_k) + \wp(x_j +x_k) \right\} , 
\end{split}
\end{equation} 
\begin{equation}
\Psi_N(x;\{g_\nu\}_{\nu=0}^3,\lambda) =\left(  \prod_{j=1}^{N} \prod_{\nu=0}^3 \theta_{\nu+1} (x_j)^{g_{\nu} } \right) \prod_{1\leq j<k\leq N} \theta_1(x_j-x_k)^{\lambda} \theta_1(x_j+x_k)^{\lambda} 
\end{equation}
for $x=(x_1,\ldots,x_N)$ and complex variables $x_j$. We also use the abbreviations
\begin{equation}
\label{c0} 
c_0 = \{ (g_0 g_1+ g_2g_3 )e_1 +(g_0 g_2+ g_1 g_3 )e_2 +(g_0 g_3+ g_1 g_2 )e_3 \} ,
\end{equation} 
\begin{equation}
|g|=g_0+g_1+g_2+g_3. 
\end{equation} 

We first state the many-variable generalization of the non-stationary Heun equation in \Ref{Psi}--\Ref{nsHeunPsi}. 

\begin{cor}
\label{cor1} 
For $N$ a positive integer, $g_\nu$ ($\nu=0,1,2,3$) and $\lambda$ complex parameters, the following holds true
\begin{equation} 
\left\{ A_N\frac{\partial}{\partial\beta} +H_N(x;\{g_\nu\}_{\nu=0}^3,\lambda) -C_N  \right\} \Psi_N(x)=0
\end{equation}
with
\begin{equation}
A_N=4\lambda(N-1) + 2|g|
\end{equation} 
\begin{equation}
\label{CN} 
C_N= \frac{A_N}{2}N[1-\lambda(N-1)-|g|]\frac{\eta_1}{\omega_1} + Nc_0. 
\end{equation}  
\end{cor}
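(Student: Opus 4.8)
The plan is to obtain Corollary~\ref{cor1} directly from the Source Identity (Theorem~\ref{thm:main}) by specializing to the case of equal masses. Concretely, I would set $\cN=N$, take $m_J=1$ for all $J=1,\ldots,N$, identify the variables $X_J=x_j$, and choose the parameters $d_\nu=g_\nu-\lambda/2$ so that the single-particle couplings in \Ref{gnuJ} reduce to $g_{\nu,J}=m_J d_\nu+\frac{\lambda}{2}m_J^2=g_\nu$. The whole corollary should then fall out by substitution, once each ingredient of Theorem~\ref{thm:main} is matched term by term with its counterpart in the corollary.

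First I would check the operator and the function. By \Ref{gammaJK}, $\gamma_{J,K}=\lambda(1+1)(\lambda-1)=2\lambda(\lambda-1)$, so the two-body part of $\cH$ in \Ref{cH} becomes exactly the interaction term of $H_N$ in \Ref{I01}; since $1/m_J=1$, the one-body part of $\cH$ coincides with that of $H_N$ as well, so $\cH=H_N(x;\{g_\nu\}_{\nu=0}^3,\lambda)$. Likewise the exponents $m_Jm_K\lambda$ in \Ref{Phi0} collapse to $\lambda$, so $\Phi_0(X)$ becomes $\Psi_N(x;\{g_\nu\}_{\nu=0}^3,\lambda)$. For the coefficient of $\partial/\partial\beta$ I would use $|m|=N$ and $|d|=\sum_\nu(g_\nu-\lambda/2)=|g|-2\lambda$, giving $4\lambda|m|+2|d|=4\lambda(N-1)+2|g|=A_N$.

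It then remains to check that $\cE_0$ in \Ref{cE0} collapses to $C_N$ in \Ref{CN}. Using $|m|=N$, $|m^2|=N$, and $|m|^2=N^2$, the prefactor becomes $2\lambda|m|+|d|=2\lambda(N-1)+|g|=A_N/2$, while the brace simplifies to $N[1-\lambda(N-1)-|g|]$, reproducing the first term of $C_N$. The second term of $\cE_0$ is $|m|$ times an $e_\nu$-weighted sum of the products $d_\mu d_{\mu'}$; here lies the one nonroutine point, since those products are built from the $d_\nu$ whereas $c_0$ in \Ref{c0} is built from the $g_\nu$. Substituting $d_\nu=g_\nu-\lambda/2$ into, for instance, $d_0d_1+d_2d_3$ yields $g_0g_1+g_2g_3$ plus the pair-independent correction $\lambda^2/2-\frac{\lambda}{2}|g|$, and the same correction appears in $d_0d_2+d_1d_3$ and $d_0d_3+d_1d_2$.

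The observation that closes the argument is the classical relation $e_1+e_2+e_3=0$ (the $e_\nu$ being the roots of $4t^3-g_2t-g_3$, which has no quadratic term). Summing the three products against $e_1,e_2,e_3$, the common correction factors out as $(\lambda^2/2-\frac{\lambda}{2}|g|)(e_1+e_2+e_3)=0$, so the second part of $\cE_0$ equals exactly $Nc_0$ and hence $\cE_0=C_N$. With all four ingredients matched, Theorem~\ref{thm:main} immediately gives the claimed identity. I expect the only real obstacle to be bookkeeping: carrying out the $d$-to-$g$ conversion of the $c_0$ term and confirming that the leftover contribution vanishes by $e_1+e_2+e_3=0$; every other step is a transparent substitution.
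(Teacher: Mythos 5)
Your proposal is correct and follows exactly the paper's own route: specialize Theorem~\ref{thm:main} with $\cN=N$, $m_J=1$, $d_\nu=g_\nu-\lambda/2$, and use $e_1+e_2+e_3=0$ to convert the $d$-weighted products in $\cE_0$ into $c_0$. All substitutions and the key cancellation check out; no further comment is needed.
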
 

\begin{proof}
Set $\cN=N$, $d_\nu=g_\nu-\lambda/2$ ($\nu=0,1,2,3$), $(m_J,X_J)=(1,x_J)$ for $J=1,2,\ldots,N$ in Theorem~\ref{thm:main}, and rename $\cH$, $\Phi_0(X)$, $\cE_0$  to $H_N(x)$, $\Psi_N(x)$, $C_N$, respectively.
Recall $e_1+e_2+e_3=0$, which implies $ \{ (d_0 d_1+ d_2d_3 )e_1 +(d_0 d_2+ d_1 d_3 )e_2 +(d_0 d_3+ d_1 d_2 )e_3 \} =c_0$. 
\end{proof} 

The many-variable generalization of the generalized kernel function identity in \Ref{F}--\Ref{kernel} is as follows. 
 
\begin{cor}
\label{cor2}
For $N$, $M$ non-negative integers such that $N+M>0$, $g_\nu$ ($\nu=0,1,2,3$) and $\lambda$ complex parameters, let $\tilde g_\nu=\lambda-g_\nu$ and 
\begin{equation} 
\Psi_{N,M}(x,y) =\frac{\Psi_N(x;\{g_\nu\}_{\nu=0}^3,\lambda)\Psi_M(y;\{\tilde g_\nu\}_{\nu=0}^3,\lambda)}{\prod_{j=1}^N\prod_{k=1}^M\theta_1(x_j-y_k)^\lambda\theta_1(x_j+y_k)^\lambda}.
\end{equation} 
Then
\begin{equation}
\begin{split} 
\left\{ A_{N,M}\frac{\partial}{\partial\beta} +H_N(x;\{g_\nu\}_{\nu=0}^3,\lambda) - H_M(y;\{\tilde g_\nu\}_{\nu=0}^3,\lambda)-C_{N,M} \right\} \Psi_{N,M}(x,y)=0
\end{split} 
\end{equation} 
with
\begin{equation}
A_{N,M}=4\lambda(N-M-1)+2|g|
\end{equation} 
\begin{equation}
\label{CNM} 
C_{N,M} = \frac{A_{N,M}}{2}\{(N+M)(1-\lambda) -(N-M)[(N-M-2)\lambda+|g|]\}\frac{\eta_1}{\omega_1}+(N-M)c_0.
\end{equation} 
\end{cor}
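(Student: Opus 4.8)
The plan is to obtain Corollary~\ref{cor2} as a specialization of the Source Identity (Theorem~\ref{thm:main}), exactly as Corollary~\ref{cor1} was obtained, but now with a doubled particle number and, crucially, with mass parameters of \emph{opposite sign} for the two groups of variables. Concretely, I would set $\cN = N+M$ (which is positive by hypothesis, so the theorem applies), identify the first $N$ variables with $x=(x_1,\ldots,x_N)$ via $(m_J,X_J)=(1,x_J)$ for $J=1,\ldots,N$, and identify the remaining $M$ variables with $y=(y_1,\ldots,y_M)$ via $(m_{N+k},X_{N+k})=(-1,y_k)$ for $k=1,\ldots,M$. For the coupling constants I would take $d_\nu=g_\nu-\lambda/2$, as in the proof of Corollary~\ref{cor1}. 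The key idea is that the negative masses for the $y$-block flip the sign of the factor $1/m_J$ in \Ref{cH}, turning the corresponding one-body operator into $-H_M(y)$, which is precisely what the statement requires.

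Next I would verify that this substitution reduces $\Phi_0(X)$ to $\Psi_{N,M}(x,y)$ and $\cH$ to $H_N(x;\{g_\nu\})-H_M(y;\{\tilde g_\nu\})$. For the theta powers $g_{\nu,J}=m_J d_\nu+(\lambda/2)m_J^2$ in \Ref{gnuJ}, the $x$-block gives $g_{\nu,J}=g_\nu$ while the $y$-block gives $g_{\nu,J}=-d_\nu+\lambda/2=\lambda-g_\nu=\tilde g_\nu$, matching the numerators of $\Psi_{N,M}$. The pairwise factors carry exponents $m_Jm_K\lambda$, which equal $\lambda$ within each block (reproducing $\Psi_N$ and $\Psi_M$) but equal $-\lambda$ for a mixed pair $(x_j,y_k)$ (reproducing the denominator of $\Psi_{N,M}$). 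The decisive point is the behaviour of the pairwise coupling $\gamma_{J,K}=\lambda(m_J+m_K)(\lambda m_Jm_K-1)$ in \Ref{gammaJK}: for a mixed pair one has $m_J+m_K=1+(-1)=0$, hence $\gamma_{J,K}=0$, so the would-be cross terms $\wp(x_j\pm y_k)$ vanish identically; within each block one recovers the intra-block coefficients $+2\lambda(\lambda-1)$ and $-2\lambda(\lambda-1)$, consistent with $+H_N$ and $-H_M$.

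Finally I would check the constants. Using $|m|=N-M$, $|m^2|=N+M$, $|d|=|g|-2\lambda$ and $\cN=N+M$, a direct computation gives $4\lambda|m|+2|d|=A_{N,M}$ and $2\lambda|m|+|d|=A_{N,M}/2$; substituting into the $\eta_1/\omega_1$-part of \Ref{cE0} and expanding the bracket reproduces the corresponding term of $C_{N,M}$ in \Ref{CNM}. For the remaining $e_i$-dependent term, the shift $d_\nu=g_\nu-\lambda/2$ changes each coefficient $d_\mu d_\nu+d_\rho d_\sigma$ by the \emph{same} constant $-\tfrac{\lambda}{2}|g|+\tfrac{\lambda^2}{2}$, so by $e_1+e_2+e_3=0$ the $e_i$-sum collapses to $c_0$; multiplying by $|m|=N-M$ yields the $(N-M)c_0$ term. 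I expect no genuine difficulty here: the whole argument is a controlled specialization, and the only points requiring real care are the sign bookkeeping induced by the negative masses and the verification that the mixed pairwise couplings vanish—everything else is routine algebra dictated by the formulas of Theorem~\ref{thm:main}.
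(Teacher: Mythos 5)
Your proposal is correct and is exactly the paper's own argument: the paper proves Corollary~\ref{cor2} by specializing Theorem~\ref{thm:main} with $\cN=N+M$, $d_\nu=g_\nu-\lambda/2$, masses $+1$ on the $x$-block and $-1$ on the $y$-block. Your additional verifications (vanishing of the mixed couplings $\gamma_{J,K}$, the sign flip in the one-body terms, and the reduction of $\cE_0$ to $C_{N,M}$) are all accurate and simply spell out what the paper leaves implicit.
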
 
\begin{proof} 
Similarly as above, but now set $\cN=N+M$, $d_\nu=g_\nu-\lambda/2$ ($\nu=0,1,2,3$), and 
\begin{equation*}
(m_J, X_J) = \left\{ 
\begin{array}{lr}
(1 ,x_J) ,  & J=1, \dots, N \\
(-1,y _{J-N}), & J=N+1 , \dots , N+M  
\end{array}
\right.
\end{equation*}
in Theorem~\ref{thm:main}. 
\end{proof} 

The many-variable generalization of the generalized kernel function identity in \Ref{F2}--\Ref{kernel2} is as follows. 

\begin{cor}
\label{cor3}
For $N$, $M$ non-negative integers such that $N+M>0$, $g_\nu$ ($\nu=0,1,2,3$) and $\lambda\neq 0$ complex parameters, let $\tilde g_\nu'=(2g_\nu+1-\lambda)/(2\lambda)$ and 
\begin{equation} 
\tilde \Psi_{N,M}(x,y) =\left(\prod_{j=1}^N\prod_{k=1}^M\theta_1(x_j-y_k)\theta_1(x_j+y_k)
\right) \Psi_N(x;\{g_\nu\}_{\nu=0}^3,\lambda)\Psi_M(y;\{\tilde g'_\nu\}_{\nu=0}^3,1/\lambda) .
\end{equation} 
Then
\begin{equation}
\begin{split} 
\Biggl\{ \tilde A_{N,M}\frac{\partial}{\partial\beta} +H_N(x;\{g_\nu\}_{\nu=0}^3,\lambda) +\lambda H_M(y;\{g'_\nu\}_{\nu=0}^3,1/\lambda)-\tilde C_{N,M} \Biggr\} \tilde \Psi_{N,M}(x,y)=0
\end{split} 
\end{equation} 
with
\begin{equation}
\label{tANM} 
\tilde A_{N,M} = 4\lambda(N-1) +4M+2|g|
\end{equation} 
\begin{equation}
\label{tCNM} 
\tilde C_{N,M} = \frac{\tilde A_{N,M}}{2}\{N+M-(N+M/\lambda)[(N-2)\lambda+M+|g|]-N\lambda-M/\lambda\}\frac{\eta_1}{\omega_1} +(N+M/\lambda)c_0.
\end{equation} 
\end{cor}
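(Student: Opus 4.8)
The plan is to obtain this corollary, exactly as the previous two, by specializing the Source Identity (Theorem~\ref{thm:main}); the only genuinely new ingredient is the assignment of the ``mass'' parameters $m_J$. I would set $\cN=N+M$, $d_\nu=g_\nu-\lambda/2$ for $\nu=0,1,2,3$, and
\[
(m_J,X_J)=\begin{cases}(1,\,x_J),&J=1,\dots,N,\\(1/\lambda,\,y_{J-N}),&J=N+1,\dots,N+M,\end{cases}
\]
and then rename $\cH$, $\Phi_0$, $\cE_0$ to $H_N(x;\{g_\nu\},\lambda)+\lambda H_M(y;\{\tilde g_\nu'\},1/\lambda)$, $\tilde\Psi_{N,M}(x,y)$, $\tilde C_{N,M}$, respectively. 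The choice $m_J=1/\lambda$ for the $y$-variables (in contrast to $m_J=-1$ used in Corollary~\ref{cor2}) is forced by demanding that the mixed cross-factors $\theta_1(X_J\pm X_K)^{m_Jm_K\lambda}$ between an $x$- and a $y$-variable carry exponent $m_Jm_K\lambda=1$, so as to reproduce the numerator product $\prod_{j,k}\theta_1(x_j-y_k)\theta_1(x_j+y_k)$ in $\tilde\Psi_{N,M}$.

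First I would verify that $\Phi_0$ in \Ref{Phi0} collapses to $\tilde\Psi_{N,M}(x,y)$. From \Ref{gnuJ}, the $x$-variables ($m_J=1$) give single-particle exponents $g_{\nu,J}=d_\nu+\lambda/2=g_\nu$, while the $y$-variables ($m_J=1/\lambda$) give $g_{\nu,J}=d_\nu/\lambda+1/(2\lambda)=(2g_\nu+1-\lambda)/(2\lambda)=\tilde g_\nu'$. The $x$-$x$ cross-exponents equal $\lambda$, the $y$-$y$ cross-exponents equal $1/\lambda$, and the $x$-$y$ cross-exponents equal $1$; these three blocks are precisely $\Psi_N(x;\{g_\nu\},\lambda)$, $\Psi_M(y;\{\tilde g_\nu'\},1/\lambda)$, and the prefactor product, as required.

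Next I would check that $\cH$ in \Ref{cH} becomes $H_N(x;\{g_\nu\},\lambda)+\lambda H_M(y;\{\tilde g_\nu'\},1/\lambda)$. The single-particle operators are divided by $m_J$, producing the unscaled $x$-Hamiltonian and an overall factor $\lambda$ in front of the $y$-Hamiltonian. Using \Ref{gammaJK}, the interaction strength $\gamma_{J,K}=\lambda(m_J+m_K)(\lambda m_Jm_K-1)$ evaluates to $2\lambda(\lambda-1)$ for two $x$-variables and to $2(1-\lambda)/\lambda$ for two $y$-variables, matching the two-body terms of $H_N$ and of $\lambda H_M$. The decisive point---the one that turns the single coupled operator $\cH$ into a genuine pair of decoupled Heun-type Hamiltonians, i.e.\ a kernel function---is that for one $x$- and one $y$-variable one has $\lambda m_Jm_K=\lambda\cdot 1\cdot(1/\lambda)=1$, whence $\gamma_{J,K}=0$ and the entire $x$-$y$ interaction vanishes.

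What remains is the bookkeeping of the two constants, and this is the only step that is not automatic. The prefactor $4\lambda|m|+2|d|$ of $\partial/\partial\beta$ becomes $\tilde A_{N,M}=4\lambda(N-1)+4M+2|g|$ upon inserting $|m|=N+M/\lambda$ and $|d|=|g|-2\lambda$, matching \Ref{tANM}. For the energy I would substitute $|m|=N+M/\lambda$, $|m^2|=N+M/\lambda^2$, and $|d|=|g|-2\lambda$ into $\cE_0$ of \Ref{cE0}; the second line reduces to $(N+M/\lambda)c_0$ by the same use of $e_1+e_2+e_3=0$ as in Corollary~\ref{cor1}, and the first line reduces to the $\eta_1/\omega_1$-term of \Ref{tCNM} after expanding the bracket $\cN-\lambda(|m|^2+|m^2|)-|m||d|$. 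I expect this final simplification---a routine polynomial identity in $N$, $M$, $\lambda$, and $|g|$---to be the main, though entirely mechanical, obstacle.
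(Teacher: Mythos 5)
Your proposal is correct and coincides exactly with the paper's own proof: the specialization $\cN=N+M$, $d_\nu=g_\nu-\lambda/2$, $m_J=1$ for the $x$-variables and $m_J=1/\lambda$ for the $y$-variables is precisely the choice made in Corollary~\ref{cor3}, and your verifications (vanishing of the cross coupling $\gamma_{J,K}$ via $\lambda m_Jm_K=1$, the exponents $\tilde g_\nu'$, and the reduction of $\cE_0$ to $\tilde C_{N,M}$) all check out.
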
 
\begin{proof} 
Similarly as above, but now set $\cN=N+M$, $d_\nu=g_\nu-\lambda/2$ ($\nu=0,1,2,3$), and 
\begin{equation*}
(m_J, X_J) = \left\{ 
\begin{array}{lr}
(1 ,x_J) ,  & J=1, \dots, N \\
(1/\lambda ,y _{J-N}), & J=N+1 , \dots , N+M  
\end{array}
\right.
\end{equation*}
in Theorem~\ref{thm:main}. 
\end{proof} 

We finally state the generalized kernel function identity for deformed Inozemtsev Hamiltonians. 
Note that all previous results stated in this section are special cases of this. 

\begin{cor}
\label{cor4}
For $N$, $\tN$, $M$, $\tM$ non-negative integers such that $N+\tN+M+\tM>0$, $d_\nu$ ($\nu=0,1,2,3$) and $\lambda\neq 0$ complex parameters, let $g_\nu=d_\nu+\lambda/2$ ($\nu=0,1,2,3$)  and 
\begin{equation} 
\label{HNtN1} 
\begin{split}
H^{(\pm )}_{N,\tN}(x,\tx) = & H_N(x;\{\lambda/2\pm d_\nu\}_{\nu=0}^3,\lambda) - \lambda \tilde{H}_{\tN} (\tx;\{ (1/2 \mp d_{\nu } )/\lambda \}_{\nu=0}^3,1/\lambda) \\
& +\sum_{j=1}^N\sum_{k=1}^{\tN} 2(1-\lambda) \left\{ \wp(x_j-\tx _k) +  \wp(x_j+\tx _k) \right\} ,
\end{split}
\end{equation} 
\begin{equation}
\label{PsiNtN1} 
 \Psi_{N,\tN } ^{(\pm )}(x,\tx ) = \frac{\Psi_{N}(x;\{\lambda/2\pm d_\nu\}_{\nu=0}^3,\lambda) \Psi_{\tN}(\tx;\{ (1/2 \mp d_{\nu } )/\lambda \}_{\nu=0}^3,1/\lambda)}{\prod _{j=1}^N \prod _{k=1}^{\tN} \theta_1(x_j-\tx_k) \theta_1(x_j+\tx_k)} ,
\end{equation}
\begin{equation}
\label{PsiNtNMtM} 
\begin{split}
\Psi_{N,\tN,M,\tM}(x,\tx,y,\ty) = \Psi_{N,\tN } ^{(+ )}(x,\tx ) \Psi_{M,\tM } ^{(- )}(y,\ty )   \qquad \qquad \qquad \qquad \\
\times \prod_{r=\pm} \left( \prod_{j=1}^N \frac{\prod_{k=1}^{\tM } \theta_1(x_j-r \ty _k)}{\prod_{k=1}^M \theta_1(x_j-r y_k)^{\lambda}} \right)\left(\prod_{j=1}^{\tN} \frac{\prod_{k=1}^M \theta_1(\tx_j-r y_k)}{\prod_{k=1}^{\tM } \theta_1(\tx_j-r\ty_k)^{1/\lambda}} \right).
\end{split}
\end{equation}
Then 
\begin{equation}
\label{IdNtNMtM} 
\Bigl(  A_{N,\tN,M,\tM}  \frac{\partial}{\partial\beta} +H_{N,\tN}^{(+)}(x,\tx)-H^{(-)}_{M,\tM}(y,\ty) - C_{N,\tN,M,\tM}  \Bigr)\Psi_{N,\tN,M,\tM}(x,\tx,y,\ty) = 0 
\end{equation} 
with 
\begin{equation}
\label{ANtNMtM} 
A_{N,\tN,M,\tM} = 4\lambda(N-M-1) -4(\tN-\tM) + 2|g| 
\end{equation} 
\begin{equation}
\label{CNtNMtM} 
\begin{split} 
C_{N,\tN,M,\tM} =\frac{A_{N,\tN,M,\tM}}{2}\left\{N+\tN+M+\tM -|m|[(|m|-2)\lambda+|g|] -|m^2|\lambda\right\} \frac{\eta_1}{\omega_1}+ |m|c_0 , \\
|m|=N-M-(\tN-\tM)/\lambda ,\quad |m^2|=N+M+(\tN+\tM)/\lambda ^2. \qquad\qquad
\end{split} 
\end{equation} 
\end{cor}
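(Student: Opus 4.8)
The plan is to obtain this, exactly as with the three preceding corollaries, as a single specialization of the source identity in Theorem~\ref{thm:main}. The guiding observation is that the four Inozemtsev-type operators buried inside $H^{(+)}_{N,\tN}$ and $H^{(-)}_{M,\tM}$ correspond to four distinct values of the mass parameter $m_J$, and that the deformation cross-terms and the various theta prefactors are generated automatically by the pair interactions in \Ref{gammaJK} and the pair exponents $m_Jm_K\lambda$ in \Ref{Phi0}. Concretely, I would set $\cN=N+\tN+M+\tM$, take $d_\nu$ equal to the parameters of the corollary (so that $g_\nu=d_\nu+\lambda/2$), and make the assignment
\begin{equation*}
(m_J,X_J) = \left\{
\begin{array}{ll}
(1,x_J), & J=1,\dots,N, \\
(-1/\lambda,\tx_{J-N}), & J=N+1,\dots,N+\tN, \\
(-1,y_{J-N-\tN}), & J=N+\tN+1,\dots,N+\tN+M, \\
(1/\lambda,\ty_{J-N-\tN-M}), & J=N+\tN+M+1,\dots,\cN.
\end{array}
\right.
\end{equation*}

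Next I would check that, under this substitution, $\Phi_0(X)$ in \Ref{Phi0} collapses to $\Psi_{N,\tN,M,\tM}(x,\tx,y,\ty)$ in \Ref{PsiNtNMtM} and that $\cH$ in \Ref{cH} collapses to $H^{(+)}_{N,\tN}(x,\tx)-H^{(-)}_{M,\tM}(y,\ty)$. The one-body terms are immediate from \Ref{gnuJ}: the masses $1,-1/\lambda,-1,1/\lambda$ give the couplings $\lambda/2+d_\nu$, $(1/2-d_\nu)/\lambda$, $\lambda/2-d_\nu$, $(1/2+d_\nu)/\lambda$, matching the four Hamiltonians in \Ref{HNtN1}. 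For the two-body terms one verifies, via \Ref{gammaJK}, the ten mass-pair types (four within-group and six between-group): for example a pair with masses $(1,-1/\lambda)$ gives $\gamma_{J,K}=2(1-\lambda)$, reproducing the deformation cross-term in \Ref{HNtN1}, while the matching exponent $m_Jm_K\lambda=-1$ reproduces the denominator $\theta_1(x_j-\tx_k)\theta_1(x_j+\tx_k)$ in \Ref{PsiNtN1}.

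Finally I would read off the constants. The coefficient of $\partial/\partial\beta$ follows from $A=4\lambda|m|+2|d|$ together with $|m|=N-M-(\tN-\tM)/\lambda$ and $|d|=|g|-2\lambda$, giving \Ref{ANtNMtM}; and $C_{N,\tN,M,\tM}$ follows from \Ref{cE0} upon substituting the same $|m|$ along with $|m^2|=N+M+(\tN+\tM)/\lambda^2$, and using $e_1+e_2+e_3=0$ to replace the $d_\nu$-bilinear in \Ref{cE0} by $c_0$, exactly as in the proof of Corollary~\ref{cor1}. (The identity $|d|=|g|-2\lambda$ is precisely what reconciles $-\lambda(|m|^2+|m^2|)-|m||d|$ with the bracket $-|m|[(|m|-2)\lambda+|g|]-|m^2|\lambda$ in \Ref{CNtNMtM}.)

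The only real obstacle is the bookkeeping in the second step. Among the four groups there are six between-group theta factors, and each must appear in $\Phi_0$ with the power shown in \Ref{PsiNtNMtM} and \Ref{PsiNtN1}: four of them are displayed explicitly in the last product line of \Ref{PsiNtNMtM}, with exponents $m_Jm_K\lambda$ equal to $-\lambda$ for the $x$-$y$ pairs, $+1$ for both the $x$-$\ty$ and the $\tx$-$y$ pairs, and $-1/\lambda$ for the $\tx$-$\ty$ pairs, while the remaining two, $x$-$\tx$ and $y$-$\ty$, each carry exponent $-1$ and are absorbed into the building blocks $\Psi^{(+)}_{N,\tN}$ and $\Psi^{(-)}_{M,\tM}$ through \Ref{PsiNtN1}. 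There is no genuine analytic difficulty here: the generality of Theorem~\ref{thm:main} in permitting arbitrary masses $m_J$ is exactly what makes all four Hamiltonians and all their mutual couplings emerge from one specialization, so the entire content of the corollary reduces to verifying that the four masses $\{1,-1/\lambda,-1,1/\lambda\}$ are the correct choice.
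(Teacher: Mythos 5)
Your proposal is correct and is precisely the paper's own proof: the same specialization $\cN=N+\tN+M+\tM$ with masses $(1,-1/\lambda,-1,1/\lambda)$ assigned to $x,\tx,y,\ty$ in Theorem~\ref{thm:main}, with the bookkeeping of couplings, exponents, and constants carried out correctly. The paper merely states the substitution and omits the verification details you supply.
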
 

\begin{proof} 
Similarly as above, but now set $\cN=N+\tN+M+\tM$ and 
\begin{equation*}
(m_J, X_J) = \left\{ 
\begin{array}{lr}
(1 ,x_J) ,  & J=1, \dots, N \\
(-1/\lambda ,\tx _{J-N}) , & J=N+1 , \dots N+\tN \\
(-1 ,y_{J-N-\tN}) ,  & J=N+\tN +1, \dots, N+\tN +M \\
(1/\lambda ,\ty _{J-N-\tN -M}), & J=N+\tN +M+1 , \dots N+\tN +M+\tM 
\end{array}
\right.
\end{equation*}
in Theorem~\ref{thm:main}. 
\end{proof} 

Note that $H_{N,\tN}^{(+)}(x,\tx)$ in \Ref{HNtN1} is equal to $H_{N,\tN}(x,\tx;\{g_\nu\}_{\nu=0}^3,\lambda)$ in \Ref{dI}, and $H_{N,\tN}^{(-)}(x,\tx)$ in \Ref{HNtN1} is equal to $H_{N,\tN}(x,\tx;\{\lambda-g_\nu\}_{\nu=0}^3,\lambda)$. Thus the general kernel identity in Corollary~\ref{cor4} is a natural generalization of the one in Corollary~\ref{cor2}. 

We also note that the generalized kernel function identity on Corollary~\ref{cor4} is invariant under the following transformations,
\begin{equation}
\label{sym1} 
(N,\tN,M,\tM,\{g_\nu\}_{\nu=0}^3,\lambda)\to (M,\tM,N,\tN,\{\lambda-g_\nu\}_{\nu=0}^3,\lambda)
\end{equation} 
(the variable names should also be changed correspondingly, of course) and 
\begin{equation}
\label{sym2} 
(N,\tN,M,\tM,\{g_\nu\}_{\nu=0}^3,\lambda)\to (\tN,N,\tM,M,\{(\lambda+1-2g_\nu)/(2\lambda)\}_{\nu=0}^3,1/\lambda),
\end{equation} 
and these symmetries provide non-trivial checks of our computations. To be more specific: under the transformation in \Ref{sym1}, the constants in \Ref{ANtNMtM} and \Ref{CNtNMtM} change as $A_{N,\tN,M,\tM} \to - A_{N,\tN,M,\tM} $, 
$C_{N,\tN,M,\tM} \to - C_{N,\tN,M,\tM} $, and under the transformations in \Ref{sym2} they change as $A_{N,\tN,M,\tM} \to - A_{N,\tN,M,\tM}/\lambda$, $C_{N,\tN,M,\tM} \to - C_{N,\tN,M,\tM}/\lambda$, consistent with the transformation properties of the r.h.s.\ of the generalized kernel function identity in \Ref{IdNtNMtM}. Note that not only \Ref{sym1} but also \Ref{sym2} is a duality transformation (i.e.\ applying each of these transformations twice gives the identity).

\section{Applications}
\label{sec4} 
Using the kernel functions obtained in the previous section it is possible to extend methods developed in \cite{HL,EL,Tak1} (e.g.)  to construct eigenfunctions and eigenvalues of Inozemtsev-type differential operators. This section describes a general strategy and two simple examples. More systematic studies are left to future work. 

\subsection{Integral transformations}
\label{sec4.1} 
We explain how the kernel functions obtained in the previous section can be used to construct integral transformations that map a known generalized eigenfunction of a Inozemtsev-type differential operator to a generalized eigenfunction of another such operator. 

Let $\Psi_{N,\tN,M,\tM}\equiv \Psi_{N,\tN,M,\tM}(x,\tx,y,\ty)$ be a generalized kernel function, $H_{N,\tilde N}^{(+)}(x,\tilde x)$ and $H_{M,\tilde M}^{(-)}(y,\tilde y)$ Inozemtsev-type differential operators, and  $A_{N,\tN,M,\tM}$ and $C_{N,\tN,M,\tM} $ constants as in Corollary~\ref{cor4}. If $f(y,\tilde y)$ is a generalized eigenfunctions of the differential operator $H^{(-)}_{M,\tM}(y,\ty)$ in the following sense,  
\begin{equation}
\label{fytyH-} 
\left(A_{N,\tN,M,\tM} \frac{\partial}{\partial\beta} + H^{(-)}_{M,\tM}(y,\ty)  -E \right)f(y,\ty )=0
\end{equation}
for some constant $E$, then \Ref{IdNtNMtM} implies 
\begin{equation}
\label{eq} 
\begin{split} 
\left\{ A_{N,\tN,M,\tM} \frac{\partial}{\partial\beta} + H_{N,\tilde N}^{(+)}(x,\tx) - E-C_{N,\tN,M,\tM}\right\} \Psi_{N,\tN,M,\tM}f(y,\tilde y) \\ =  f(y,\ty )\{ H_{M,\tilde M}^{(-)}(y,\ty )\Psi_{N,\tN,M,\tM}\}  - \Psi_{N,\tN,M,\tM}\{H_{M,\tilde M}^{(-)}(y,\ty )f(y,\ty ) \} \\=
-\sum_{j=1}^{M} \frac{\partial}{\partial y_j} \left(f(y,\ty )\frac{\partial}{\partial y_j}\Psi_{N,\tN,M,\tM}  - \Psi_{N,\tN,M,\tM}\frac{\partial}{\partial y_j}f(y,\ty )  \right)  \\+ \lambda \sum_{k=1}^{\tM} \frac{\partial}{\partial \tilde y_k} \left(f(y,\ty )\frac{\partial}{\partial \tilde y_k}\Psi_{N,\tN,M,\tM}  - \Psi_{N,\tN,M,\tM}\frac{\partial}{\partial \tilde y_k}f(y,\ty )  \right).
\end{split} 
\end{equation} 
Integrating this with respect to the variables $(y,\ty)$ over a suitable region $\cC$, one finds that 
\begin{equation}
\label{tf}
\tilde f(x,\tx) =   \int_{\cC} \Psi_{N,\tN,M,\tM}(x,\tx,y,\ty)f(y,\tilde y)d^Myd^{\tM}\ty
\end{equation} 
is a generalized eigenfunctions of the differential operator $H_{N,\tilde N}^{(+)}(x,\tilde x)$, i.e., 
\begin{equation}
\left(A_{N,\tN,M,\tM} \frac{\partial}{\partial\beta} + H^{(+)}_{N,\tN}(x,\tx) -E-C_{N,\tN,M,\tM} \right)\tilde f(x,\tx )=0.
\end{equation} 
Note that a key point in the derivation of this result is that the region $\cC$ is suitable in the following sense: First, the integral in \Ref{tf} has to be well-defined, and second, the integral over the total derivative terms in the last two lines of \Ref{eq} must vanish (in general, Stokes' theorem implies that the latter is equal to an integral over the boundary of $\cC$). 

\subsection{Example 1}
\label{sec4.2} 
To be specific we assume throughout this section that $\omega_1>0$, $-\ii\omega_3>0$, and that $x_j$ and $\tx_j$ are real variables. As discussed, in this case the Inozemtsev Hamiltonian can be interpreted as a quantum mechanical model of a many-particle system. 

We consider the result in the previous section in the case $M=1$, $\tM=0$, and $\tilde{g}_\nu\in\{ 0,1\}$ ($\nu=0,1,2,3$). Then  $H^{(-)}_{M,\tM}(y,\ty)=-\partial^2/\partial y^2$, and it is trivial to find solutions of \Ref{fytyH-}: $f(y) = \exp(-\ii py)$ and $E=p^2$, with $p$ an arbitrary constant. Moreover, 
\begin{equation}
\Psi_{N,\tN,1,0}(x,\tx,y) = \Psi_{N,\tN } ^{(+ )}(x,\tx )\left(\prod_{\nu=0}^3\theta_{1+\nu}(y)^{\tilde{g}_\nu}\right)
\frac{\prod_{j=1}^{\tN}\theta_1(\tx_j-y)\theta_1(\tx_j+y)}{\prod_{j=1}^N\theta_1(x_j-y)^{\lambda}\theta_1(x_j+y)^{\lambda}}.
\end{equation}
A suitable integration region in this case is any path $\cC$ in the complex $y$-plane such that $\Psi_{N,\tN,1,0}(x,\tx,y)f(y)$ is analytic in some neighborhood $\cC$ and such that 
\begin{equation}
\label{suitable} 
\int_{\cC} \frac{\partial}{\partial y} \left(f(y)\frac{\partial}{\partial y}\Psi_{N,\tN,1,0}(x,\tx,y)  - \Psi_{N,\tN,1,0}(x,\tx,y)\frac{\partial}{\partial y}f(y)  \right) dy=0.
\end{equation} 
To find such a path we use that the Theta-functions $\theta_{1+\nu}(x)$ can be expressed in terms of meromorphic functions $\check\theta_{1+\nu}(z)$ of the variable $z=\exp(-2\ii x/R)$ as follows
\begin{equation} 
\begin{array}{ll}
\bigskip
\theta_1(x)= \ee{-\pi\ii}\ee{\tau\pi\ii/4}\ee{\ii x/R}\check\theta_1(z) , &\check\theta_1(z)  = \sum_{n=0}^\infty (-1)^n \ee{\tau\pi\ii n(n+1)}(z^{-n}-z^{n+1})\\
\bigskip
\theta_2(x)= \ee{\tau\pi\ii/4}\ee{\ii x/R}\check\theta_2(z) , &\check\theta_2(z) = \sum_{n=0}^\infty \ee{\tau\pi\ii n(n+1)}(z^{-n}+z^{n+1})\\
\bigskip
\theta_3(x)= \check\theta_3(z) , &\check\theta_3(z)  = 1+\sum_{n=1}^\infty \ee{\tau\pi\ii n^2}(z^{-n}+z^{n})\\
\medskip
\theta_4(x)= \check\theta_4(z) , &\check\theta_4(z)  = 1+\sum_{n=1}^\infty (-1)^{n-1} \ee{\tau\pi\ii n^2}(z^{-n}+z^{n})
\end{array} 
\end{equation} 
(this is a simple consequence of the definition of the Theta-functions in \Ref{thenu}). For $p=(2n+\tilde{g}_0+\tilde{g}_1)/R$ and $n$ an integer, one thus finds that $\Psi_{N,\tN,1,0}(x,\tx,y)\exp(-\ii py)$ is a holomorphic function in the variable $\xi=\exp(-2\ii y/R)$ in the region $1<|\xi|<q^{-2}$, 
and that \Ref{suitable} is satisfied for $\cC$ the straight line from $y=\ii \epsilon$ to $\pi R+\ii \epsilon$, with $\epsilon>0$ such that $\exp(2\epsilon/R)<q^{-2}$ (since this corresponds to a closed path in the complex $\xi$-plane where the integrand is holomorphic). 
Taking the limit $\epsilon\to 0^+$ we obtain the following.  
\begin{prop}
For $N$, $\tN$ non-negative integers such that $N+\tN>0$, $\lambda$ a non-zero constant, $\tilde{g}_\nu\in\{0,1\}$ ($\nu=0,1,2,3$), and $n$ an arbitrary integer, let $d_\nu=\lambda/2-\tilde{g}_\nu$, $H^{(+)}_{N,\tN}(x,\tx)$ as in \Ref{HNtN1}, and $\Psi^{(+)}_{N,\tN}(x,\tx)$ as in \Ref{PsiNtN1}. Then the function 
\begin{equation} 
\begin{split} 
\label{tilde fn} 
\tilde f_n(x,\tx) =  \Psi_{N,\tN } ^{(+ )}(x,\tx )&\lim_{\epsilon\to 0^+}\int_{\ii \epsilon}^{\pi R+\ii \epsilon} \left(\prod_{\nu=0}^3\theta_{1+\nu}(y)^{\tilde{g}_\nu}\right)\\  & \times 
\frac{\prod_{j=1}^{\tN}\theta_1(\tx_j-y)\theta_1(\tx_j+y)}{\prod_{j=1}^N\theta_1(x_j-y)^{\lambda}\theta_1(x_j+y)^{\lambda}}
\ee{-\ii(2n+\tilde{g}_0+\tilde{g}_1)y/R}dy
\end{split}  
\end{equation} 
is well-defined and obeys the equation
\begin{equation} 
\label{result1}
\left(A_{N,\tN,1,0}\frac{\partial}{\partial\beta} + H^{(+)}_{N,\tN}(x,\tx) - (2n+\tilde{g}_0+\tilde{g}_1)^2/R^2 -C_{N,\tN,1,0}  \right)\tilde f_n(x,\tx) =0  
\end{equation} 
with
\begin{equation}
A_{N,\tN,1,0} = 4\lambda N -4\tN - 2|\tilde{g}| ,
\end{equation} 
\begin{equation}
\begin{split} 
C_{N,\tN,1,0} =\frac{A_{N,\tN,1,0}}{2}\left\{N+\tN+1-|m|[(|m|+2)\lambda-|\tilde{g}|] -|m^2|\lambda\right\}+|m|c_0,\qquad \qquad\\ c_0=\{ (\tilde{g}_0 \tilde{g}_1+ \tilde{g}_2\tilde{g}_3 )e_1 +(\tilde{g}_0 \tilde{g}_2+ \tilde{g}_1 \tilde{g}_3 )e_2 +(\tilde{g}_0 \tilde{g}_3+ \tilde{g}_1 \tilde{g}_2 )e_3 \} , \qquad\qquad \quad\\
|\tilde{g}|=\tilde{g}_0+\tilde{g}_1+\tilde{g}_2+\tilde{g}_3,\quad |m|=N-1-\tN/\lambda ,\quad |m^2|=N+1+\tN/\lambda ^2. 
\end{split} 
\end{equation} 
\end{prop}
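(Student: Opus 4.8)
The plan is to apply the integral-transformation scheme of Section~\ref{sec4.1} in the degenerate case $M=1$, $\tM=0$, where one only has to supply a seed eigenfunction $f(y)$ of $H^{(-)}_{1,0}(y)$ and then check that the prescribed contour $\cC$ is suitable in the two senses spelled out after \Ref{tf}. First I would record the reduction that makes this case tractable: the choice $d_\nu=\lambda/2-\tilde g_\nu$ gives the $(-)$-operator the couplings $\lambda/2-d_\nu=\tilde g_\nu\in\{0,1\}$, so every external coefficient $\tilde g_\nu(\tilde g_\nu-1)$ vanishes; since $M=1$ there are no pair terms and since $\tM=0$ there is no deformation term, whence $H^{(-)}_{1,0}(y)=-\partial^2/\partial y^2$. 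Consequently $f(y)=\ee{-\ii py}$ is $\beta$-independent and solves \Ref{fytyH-} with $E=p^2$, and specializing \Ref{tf} reproduces exactly the integrand written before the Proposition, while \Ref{eq} collapses to the single total-derivative term governed by \Ref{suitable}.

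The substance of the proof is showing that $\cC$ is suitable. Passing to the variable $\xi=\ee{-2\ii y/R}$ and inserting the Theta-function expansions, the prefactors $\ee{\pm\ii y/R}$ coming from $\theta_1(\tx_j\pm y)$ and from $\theta_1(x_j\pm y)^{-\lambda}$ cancel between the two signs, leaving only the factor $\ee{\ii(\tilde g_0+\tilde g_1)y/R}$ from $\prod_\nu\theta_{1+\nu}(y)^{\tilde g_\nu}$. Choosing the quantized momentum $p=(2n+\tilde g_0+\tilde g_1)/R$ makes the net $y$-exponential of the integrand equal to $\xi^n$, an integer power, and the remaining Theta-factors become Laurent-type series in $\xi$ that converge and stay away from the zeros of $\check\theta_1$ on the annulus $1<|\xi|<q^{-2}$; thus the integrand $G(y):=\Psi_{N,\tN,1,0}\,\ee{-\ii py}$ is single-valued and holomorphic there. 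Since the segment from $\ii\epsilon$ to $\pi R+\ii\epsilon$ maps to a circle of radius $\ee{2\epsilon/R}\in(1,q^{-2})$ traversed once, the integral $\tilde f_n=\oint G\,dy$ is finite. For the boundary term I would use that the current $f\partial_y\Psi-\Psi\partial_y f$ equals $\partial_y G+2\ii p G$, which is $\pi R$-periodic because $G$ is single-valued in $\xi$; hence the endpoint values coincide and \Ref{suitable} holds. The limit $\epsilon\to0^+$ then produces the well-defined $\tilde f_n$ of the statement.

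With suitability established, \Ref{eq}--\Ref{tf} immediately give that $\tilde f_n$ is a generalized eigenfunction of $H^{(+)}_{N,\tN}(x,\tx)$ with shifted eigenvalue $E+C_{N,\tN,1,0}=(2n+\tilde g_0+\tilde g_1)^2/R^2+C_{N,\tN,1,0}$, which is \Ref{result1}. The explicit constants follow by specializing Corollary~\ref{cor4} to $M=1$, $\tM=0$: using $g_\nu=\lambda-\tilde g_\nu$, hence $|g|=4\lambda-|\tilde g|$, turns \Ref{ANtNMtM} into $A_{N,\tN,1,0}=4\lambda N-4\tN-2|\tilde g|$ and rewrites \Ref{CNtNMtM}; the passage to $c_0$ expressed in the $\tilde g_\nu$ is justified, exactly as in the proof of Corollary~\ref{cor1}, by $e_1+e_2+e_3=0$, which cancels all cross terms proportional to the common shift.

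I expect the main obstacle to be the analytic claim of the second paragraph, namely that the full integrand is genuinely single-valued and holomorphic on $1<|\xi|<q^{-2}$. The delicate point is the fractional powers $\check\theta_1(\cdot)^{-\lambda}$: one must verify that their branch points and the zeros of $\check\theta_1$ lie outside this annulus, so that no branch cut crosses $\cC$; this is precisely what forces both the quantization of $p$ and the specific choice of annulus. Once that is in hand, the vanishing of the boundary term and the bookkeeping of the constants are routine.
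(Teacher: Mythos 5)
Your proposal follows essentially the same route as the paper, whose ``proof'' is the discussion preceding the Proposition in Section~\ref{sec4.2}: reduce to the scheme of Section~\ref{sec4.1} with $M=1$, $\tM=0$ so that $H^{(-)}_{1,0}(y)=-\partial^2/\partial y^2$ and $f(y)=\ee{-\ii p y}$, pass to $\xi=\ee{-2\ii y/R}$ via the $\check\theta_\nu$ expansions to see that the quantization $p=(2n+\tilde g_0+\tilde g_1)/R$ closes the contour in the annulus $1<|\xi|<q^{-2}$, and read off the constants from Corollary~\ref{cor4}. The ``main obstacle'' you single out at the end --- that avoiding the zeros of $\check\theta_1$ on a non-simply-connected annulus does not by itself make the fractional powers $\check\theta_1(\cdot)^{-\lambda}$ single-valued, so the winding of $\theta_1(x_j-y)\theta_1(x_j+y)$ along $\cC$ must also be accounted for --- is exactly the point the paper itself leaves implicit, so you are, if anything, slightly more explicit than the source.
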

A noteworthy special case is $N=1$, $\tN=0$, and $\lambda=|\tilde{g}|/2$. Then the function in \Ref{tilde fn} is 
\begin{equation} 
\tilde f_n(x) = \left(\prod_{\nu=0}^3\theta_{1+\nu}(x)^{g_\nu}\right)\int_0^{\pi R} \frac{\left(\prod_{\nu=0}^3\theta_{1+\nu}(y)^{\tilde{g}_\nu}\right)}{\theta_1(x-y)^{\lambda}\theta_1(x+y)^{\lambda}}\ee{-\ii(2n+\tilde{g}_0+\tilde{g}_1)y/R}dy
\end{equation} 
with $g_\nu =|\tilde{g}|/2-\tilde{g}_\nu$, and  \Ref{result1} is the Heun differential equation
\begin{equation}
\left( -\frac{\partial^2}{\partial x^2} +\sum_{\nu=0}^{3}g_\nu(g_{\nu}-1) \wp(x+\omega_\nu) - (2n+\tilde{g}_0+\tilde{g}_1)^2/R^2  \right)\tilde f_n(x) =0 .
\end{equation} 

It is worth noting that the integral factor on the r.h.s.\  in \Ref{tilde fn} is proportional to the function $f_n(z,\tilde z)$, $z_j=\exp(-2\ii x_j/R)$ and $\tilde z_j=\exp(-2\ii \tilde x_j/R)$, defined by the following generating function, 
\begin{equation} 
\label{gen fun} 
\left(\prod_{\nu=0}^3\check\theta_{1+\nu}(\xi)^{\kappa_\nu}\right)
\frac{\prod_{j=1}^{\tN}\check\theta_1(\tz_j/\xi)\check\theta_1(\tz_j \xi)}{\prod_{j=1}^N\check\theta_1(z_j/\xi)^{\lambda}\check\theta_1(z_j \xi)^{\lambda}} = \sum_{n=-\infty}^\infty f_{n}(z,\tz) \xi^{-n}, 
\end{equation} 
where the series on the r.h.s.\ is absolutely convergent in the region $1<|\xi|<q^{-2}$. 

\subsection{Example 2}
We show how kernel functions can be used to transform Bethe ansatz solutions of the Heun equation obtained in \cite{Tak1} to eigenfunctions of Inozemtsev-type differential operators with arbitrary particle numbers $N$, $\tN$ (this is a generalization of a result obtained in \cite{TakID} for the case $N=1$, $\tN=0$). 
 
 As shown in \cite{Tak1}, for arbitrary non-negative integers $n_\nu$ ($\nu=0,1,2,3$) and arbitrary constants $\tilde E$, the Heun differential equation
 \begin{equation} 
 \label{HeunEq}
 \left( -\frac{\partial^2}{\partial y^2} + \sum_{\nu=0}^3 n_\nu(n_\nu+1)\wp(y+\omega_\nu)-\tilde E\right)f(y)=0 
 \end{equation} 
 has a non-zero solution which can be written as 
 \begin{equation}
f(y)=\frac{\displaystyle \exp(\kappa y)\prod_{j=1}^{n_0 +n_1 +n_2 +n_3 } \theta _1 (y+t_j)}{\theta _1 (y)^{n_0 }\theta _2 (y)^{n_1} \theta _3(y)^{n_2}\theta _4(y)^{n_3}},
\label{tBV}
\end{equation}
for some constants $t_j$ $(j=1,\dots ,n_0 +n_1 +n_2 +n_3)$ and $\kappa$ (see \cite{Tak1} for how these constants can be determined). 

Choosing $M=1$, $\tM=0$, $\tilde{g}_\nu\in\{n_\nu +1,-n_\nu\}$, $\lambda$ such that $A_{N,\tN,1,0}=0$, $\tilde E=E+C_{N,\tN,1,0}$, we can use the result in Section~\ref{sec4.1} to transform this solution to an eigenfunction of the Inozemtsev-type differential operator $H^{(+)}_{N,\tN}(x,\tx)$ with $g_\nu=\lambda-\tilde g_\nu$ ($\nu=0,1,2,3$). As a suitable integration region we now choose a closed path $\cC$ in the complex $y$-plane such that $\Psi_{N,\tN,1,0}(x,\tx,y)f(y)$ is analytic in some neighborhood of this path and such that \Ref{suitable} is fulfilled. 
For example, for $p\in \omega_1\Z+\omega_3\Z$, $i\in\{1,2,\ldots,N\}$, we can choose as $\cC$ a figure-eight contour in the $y$-plane which encloses $y=x_i$ counterclockwise, $y=2p - x_i $ clockwise, and which does not contain branching points of $\Psi_{N,\tN,1,0}(x,\tx,y)$ other than $y=x_i , 2p - x_i $ inside\footnote{The second author apologizes for a vague description of the integral routes in \cite[Proposition 6]{TakID}. The orientations of them should be specified as the figure-eight contour in this paper.}; see Fig.~A (the condition \Ref{suitable} is satisfied, because the function $ f(y)\frac{\partial}{\partial y}\Psi_{N,\tN,1,0}(x,\tx,y)- \Psi_{N,\tN,1,0}(x,\tx,y) \frac{\partial}{\partial y} f(y) $ comes back to the original value after analytic continuation along $\cC$). 

\begin{center}
\begin{picture}(40,80)(0,0)
\put(90,70){\circle*{3}}
\put(140,70){\circle*{3}}
\put(100,50){\circle*{3}}
\put(60,30){\circle*{3}}
\qbezier(90,70)(120,80)(140,80)
\qbezier(150,70)(150,80)(140,80)
\qbezier(150,70)(150,60)(140,60)
\qbezier(90,70)(120,60)(140,60)
\qbezier(90,70)(70,20)(60,20)
\qbezier(90,70)(50,40)(50,30)
\qbezier(60,20)(50,20)(50,30)
\put(140,80){\vector(1,0){1}}
\put(138,60){\vector(-1,0){1}}
\put(73,33){\vector(1,2){5}}
\put(66,51){\vector(-1,-1){10}}
\put(102,48){$p$}
\put(61,32){$x_i$}
\put(151,66){$2p-x_i$}
\end{picture}
Fig.~A. Figure-eight contour $\cC$.
\end{center}

Obviously there are many more such closed paths $\cC$, which we call suitable. We thus obtain the following. 

\begin{prop}
\label{res3}
For $N$, $\tN$ non-negative integers such that $N+\tN>0$, $n_\nu$ non-negative integers ($\nu=0,1,2,3$), let $\tilde{g}_\nu\in\{n_\nu +1,-n_\nu\}$,  $\lambda$ such that $2\lambda N-2\tN=\tilde{g}_0+\tilde{g}_1+\tilde{g}_2+\tilde{g}_3$, $d_\nu=\lambda/2-\tilde{g}_\nu$, $H^{(+)}_{N,\tN}(x,\tx)$ as in \Ref{HNtN1}, and $\Psi^{(+)}_{N,\tN}(x,\tx)$ as in \Ref{PsiNtN1}. Then, for any constant $E$, there exit constants $t_i$($i=1,2,\ldots,n_0+n_1+n_2+n_3$) and $\kappa$ such that, for any suitable closed path $\cC$ in the complex $y$-plane (e.g. the figure-eight contour in Fig. A), the function 
\begin{equation}
\begin{split}
& \Psi_{N,\tN } ^{(+ )}(x,\tx ) \int_{\cC} 
\frac{\prod_{j=1}^{\tN}\theta_1(\tx_j-y)\theta_1(\tx_j+y)}{\prod_{j=1}^N\theta_1(x_j-y)^{\lambda}\theta_1(x_j+y)^{\lambda}} \\
& \qquad \qquad \qquad \qquad \exp(\kappa y) \left(\prod_{\nu=0}^3\theta_{1+\nu}(y)^{ \tilde{g}_\nu - n_\nu}\right) \prod_{j=1}^{n_0 +n_1 +n_2 +n_3 } \theta _1 (y+t_j)dy , 
\end{split}
\end{equation}
is an eigenfunction of $H_{N,\tN}^{(+)}(x,\tx) $ with the eigenvalue $E$.
\end{prop}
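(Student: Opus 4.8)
The plan is to obtain Proposition~\ref{res3} as the $M=1$, $\tM=0$ case of the integral-transformation scheme of Section~\ref{sec4.1}, using the Bethe-ansatz solution \Ref{tBV} of \cite{Tak1} as the seed. First I would make the seed operator explicit: with $M=1$, $\tM=0$ the operator $H^{(-)}_{M,\tM}$ in \Ref{HNtN1} has neither a $\tilde H$-factor nor a cross term, so $H^{(-)}_{1,0}(y)=H_1(y;\{\tilde g_\nu\}_{\nu=0}^3,\lambda)=-\partial^2/\partial y^2+\sum_{\nu=0}^3\tilde g_\nu(\tilde g_\nu-1)\wp(y+\omega_\nu)$, where I used $d_\nu=\lambda/2-\tilde g_\nu$. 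The elementary but decisive observation is that both admissible values $\tilde g_\nu\in\{n_\nu+1,-n_\nu\}$ give $\tilde g_\nu(\tilde g_\nu-1)=n_\nu(n_\nu+1)$, so $H^{(-)}_{1,0}(y)$ is exactly the Heun operator of \Ref{HeunEq}. By \cite{Tak1}, for the appropriate spectral value there is a nonzero solution $f(y)$ of \Ref{HeunEq} of the form \Ref{tBV}, which fixes the constants $t_j$ and $\kappa$; the spectral value is tied to $E$ through the relation $\tilde E=E+C_{N,\tN,1,0}$ recorded before the Proposition.

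Next I would verify that $f$ feeds Section~\ref{sec4.1}. The hypothesis $2\lambda N-2\tN=\tilde g_0+\tilde g_1+\tilde g_2+\tilde g_3$ is precisely $A_{N,\tN,1,0}=0$ (from $A_{N,\tN,1,0}=4\lambda N-4\tN-2|\tilde g|$ in Example~1), so the $\partial/\partial\beta$-term in \Ref{fytyH-} disappears and that equation reduces to the Heun equation solved by $f$. Writing out $\Psi_{N,\tN,1,0}(x,\tx,y)f(y)$ and using that \Ref{tBV} carries the denominator $\prod_\nu\theta_{\nu+1}(y)^{n_\nu}$, the $y$-dependent theta powers combine into $\prod_\nu\theta_{\nu+1}(y)^{\tilde g_\nu-n_\nu}$, so the integrand matches the one displayed in the Proposition exactly. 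It then remains to run \Ref{eq}--\Ref{tf} for the figure-eight contour.

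The decisive point --- and the step I expect to be the main obstacle --- is showing that the figure-eight $\cC$ is \emph{suitable}. With $\tM=0$ the last line of \Ref{eq} is absent, and \Ref{suitable} asks that $\int_\cC\partial_y W\,dy=0$ for the Wronskian $W(y)=f(y)\,\partial_y\Psi_{N,\tN,1,0}-\Psi_{N,\tN,1,0}\,\partial_y f(y)$. Since $\partial_y W$ is an exact derivative, this integral vanishes over any closed path on which $W$ is single-valued, so the whole matter reduces to a monodromy computation. Here I would note that the exponents $\tilde g_\nu-n_\nu\in\{1,-2n_\nu\}$ are integers, so the factors $\prod_\nu\theta_{\nu+1}(y)^{\tilde g_\nu-n_\nu}$ and $\ee{\kappa y}\prod_j\theta_1(y+t_j)$ are single-valued; the only branching of $\Psi_{N,\tN,1,0}$ comes from $\theta_1(x_j-y)^{-\lambda}$ and $\theta_1(x_j+y)^{-\lambda}$, which branch at $y\equiv\pm x_j$ modulo the period lattice. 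The figure-eight encircles exactly one reflected pair, $y=x_i$ (counterclockwise) and $y=2p-x_i\equiv -x_i$ (clockwise, using $2p\in 2\omega_1\Z+2\omega_3\Z$ and $\theta_1(0)=0$); circling $x_i$ multiplies $\theta_1(x_i-y)^{-\lambda}$ by $\ee{-2\pi\ii\lambda}$ while circling $2p-x_i$ multiplies $\theta_1(x_i+y)^{-\lambda}$ by $\ee{2\pi\ii\lambda}$, so the two factors cancel and $W$ returns to its initial value along $\cC$. This reflection pairing with opposite orientations is exactly the content of the footnote, and the same reasoning characterizes precisely which other closed paths are admissible.

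Finally, integrating \Ref{eq} over $\cC$ and using $A_{N,\tN,1,0}=0$, the conclusion of Section~\ref{sec4.1} shows that $\tilde f(x,\tx)$ solves $H^{(+)}_{N,\tN}(x,\tx)\tilde f=E\,\tilde f$, the Heun spectral value $\tilde E$ having been tuned to $E$ through $\tilde E=E+C_{N,\tN,1,0}$ precisely so that the shift produced by the transformation lands the eigenvalue at $E$. The only remaining care is to confirm that the integrand is analytic in a neighborhood of $\cC$ and that $\tilde f\not\equiv 0$, both of which follow from the explicit form \Ref{tBV} together with the genericity of the $x_i$.
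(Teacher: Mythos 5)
Your proposal is correct and follows essentially the same route as the paper: specialize the integral-transformation scheme of Section~\ref{sec4.1} to $M=1$, $\tM=0$ with the Bethe-ansatz seed \Ref{tBV}, use $\tilde g_\nu(\tilde g_\nu-1)=n_\nu(n_\nu+1)$ and $A_{N,\tN,1,0}=0$, and check that the figure-eight contour is suitable via trivial total monodromy of the Wronskian. Your explicit monodromy computation (integrality of the exponents $\tilde g_\nu-n_\nu$ plus cancellation of the $\ee{\mp 2\pi\ii\lambda}$ factors from the oppositely oriented loops around $y=x_i$ and $y=2p-x_i$) merely spells out what the paper asserts in one line, so there is nothing to add.
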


Note that the constants $t_j$ and $\kappa$ can be specified by the condition that the function \Ref{tBV} satisfies \Ref{HeunEq} with the eigenvalue $\tilde E=E+C_{N,\tN,1,0}$. Note also that there is another expression of solutions of the Heun equation \Ref{HeunEq} by the Hermite-Krichever ansatz \cite{Tak4}, and we can obtain a similar result to Proposition~\ref{res3}.

We now describe a classical example of the Bethe ansatz where the constants $t_j$ and $\kappa$ can be specified in a simple manner. If $n_0=1$, $n_1=n_2=n_3=0$,  \Ref{HeunEq} reduces to a Lam\'e equation, and its solution in \Ref{tBV} can be written as
\begin{equation}
f(y) = \exp \left( -\zeta (t) y \right) \frac{\sigma (y+t)}{\sigma (y)} = \exp \left( \eta _1 t^2/(2 \omega _1) \right)  \exp \left(-\phi _1 (t) y  \right) \frac{\theta _1 (y+t)}{\theta _1 (y)}  
\label{LalphaLame1}
\end{equation}
with $t$ such that $\wp (t)=-\tilde{E}$; see \cite{Tak4} and references therein. By specializing to this case and $\tilde g_0=-1$, $\tilde g_1=\tilde g_2=\tilde g_3=0$ we obtain the following (note that $c_0=0$ if $\tilde g_1=\tilde g_2=\tilde g_3=0$).

\begin{prop}
\label{res4}
For $N$ a positive integer, $\tN$ a non-negative integer, let $\lambda=(2\tN+1)/(2N)$, $d_0=\lambda/2+1$,  $d_1=d_2=d_3=\lambda/2$, $H^{(+)}_{N,\tN}(x,\tx)$ as in \Ref{HNtN1}, and $\Psi^{(+)}_{N,\tN}(x,\tx)$ as in \Ref{PsiNtN1}. Then, for any closed suitable path $\cC$ in the complex $y$-plane as described above (e.g. the figure-eight contour in Fig. A), and for any constant $t$ such that $\wp(t)$ is finite, the function
\begin{equation}
\Psi_{N,\tN } ^{(+ )}(x,\tx ) \int_{\cC} 
\frac{\prod_{j=1}^{\tN}\theta_1(\tx_j-y)\theta_1(\tx_j+y)}{\prod_{j=1}^N\theta_1(x_j-y)^{\lambda}\theta_1(x_j+y)^{\lambda}} \exp(-\phi _1 (t)  y) \frac{ \theta _1 (y+t)}{ \theta_{1}(y)^{2}}dy 
\end{equation}
is an eigenfunction of $H_{N,\tN}^{(+)}(x,\tx) $ with the eigenvalue $E=-\wp(t)$.
\end{prop}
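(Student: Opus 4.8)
The plan is to read Proposition~\ref{res4} as the Lam\'e specialisation ($n_0=1$, $n_1=n_2=n_3=0$) of Proposition~\ref{res3}, in which the Bethe-ansatz data of \Ref{tBV} collapse to the classical Hermite solution \Ref{LalphaLame1} and can therefore be written in closed form. First I would fix the branch $\tilde g_\nu=-n_\nu$ of Proposition~\ref{res3}, that is $\tilde g_0=-1$ and $\tilde g_1=\tilde g_2=\tilde g_3=0$; via $d_\nu=\lambda/2-\tilde g_\nu$ this is exactly the choice producing the stated couplings $d_0=\lambda/2+1$ and $d_1=d_2=d_3=\lambda/2$, and it is the branch for which $c_0=0$. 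I would then impose on $\lambda$ the condition $A_{N,\tN,1,0}=0$ required in Proposition~\ref{res3}, which removes the $\partial/\partial\beta$-term from the integral transform of Section~\ref{sec4.1}. With these choices $H^{(-)}_{1,0}(y)$ reduces to the Lam\'e operator $-\partial^2/\partial y^2+2\wp(y)$, so that \Ref{HeunEq} is the Lam\'e equation.

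Next I would replace the abstract solution \Ref{tBV} by \Ref{LalphaLame1}. For $n_0=1$ there is a single auxiliary parameter $t_1=t$, and \Ref{LalphaLame1} gives $f(y)=\exp(\eta_1 t^2/(2\omega_1))\exp(-\phi_1(t)y)\,\theta_1(y+t)/\theta_1(y)$ with $\wp(t)=-\tilde E$; thus the constants of Proposition~\ref{res3} are fixed explicitly as $\kappa=-\phi_1(t)$ and $t_1=t$, with $t$ now a free parameter that labels the eigenvalue rather than being determined by solving Bethe equations. Substituting $\tilde g_\nu-n_\nu$ (equal to $-2$ for $\nu=0$ and $0$ otherwise) into the integrand of Proposition~\ref{res3} collapses the theta-factor $\prod_{\nu=0}^3\theta_{1+\nu}(y)^{\tilde g_\nu-n_\nu}\theta_1(y+t)$ to $\theta_1(y+t)/\theta_1(y)^2$, reproducing exactly the integrand in the statement; the $y$-independent prefactor $\exp(\eta_1 t^2/(2\omega_1))$ factors out of the integral and may be dropped, since eigenfunctions are defined only up to scalar multiples.

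It then remains to identify the eigenvalue. Because $A_{N,\tN,1,0}=0$, the transform of Section~\ref{sec4.1} sends a stationary eigenfunction of $H^{(-)}_{1,0}(y)$ with parameter $E$ to a stationary eigenfunction of $H^{(+)}_{N,\tN}(x,\tx)$ with eigenvalue $E+C_{N,\tN,1,0}$. The Lam\'e solution satisfies $(H^{(-)}_{1,0}(y)+\wp(t))f(y)=0$, so $E=-\wp(t)$; moreover the constant $C_{N,\tN,1,0}$ of the Proposition in Section~\ref{sec4.2} equals $|m|c_0$ once $A_{N,\tN,1,0}=0$, and $c_0$ vanishes because $\tilde g_1=\tilde g_2=\tilde g_3=0$. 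Hence $C_{N,\tN,1,0}=0$ and the output eigenvalue is exactly $-\wp(t)$, as claimed; the requirement that $\wp(t)$ be finite simply excludes the lattice points at which the eigenvalue would diverge and the Hermite solution degenerates.

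The one genuinely substantive point is to check that the chosen contour $\cC$ remains \emph{suitable} for the specialised integrand, i.e.\ that the total-derivative contribution in \Ref{eq} still integrates to zero; this is where I would expect any real work to lie, since the specialisation changes the local behaviour at $y=0$ (the factor $\theta_1(y)^{-2}$ gives a double pole rather than a branch point). However, the argument already supplied for Proposition~\ref{res3} — that $f(y)\partial_y\Psi_{N,\tN,1,0}-\Psi_{N,\tN,1,0}\partial_y f(y)$ returns to its original value after analytic continuation around the figure-eight contour — carries over verbatim, so the specialisation introduces no new obstruction and the proof reduces to the substitutions above.
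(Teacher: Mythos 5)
Your route is exactly the paper's: Proposition~\ref{res4} is read as the specialization of Proposition~\ref{res3} to $n_0=1$, $n_1=n_2=n_3=0$ with the branch $\tilde g_0=-n_0=-1$, $\tilde g_1=\tilde g_2=\tilde g_3=0$, with the abstract Bethe solution \Ref{tBV} replaced by the closed-form Lam\'e/Hermite solution \Ref{LalphaLame1}. Your identifications ($\kappa=-\phi_1(t)$, $t_1=t$ free with $\wp(t)=-\tilde E$, the collapse of the theta prefactor to $\theta_1(y+t)/\theta_1(y)^2$, the vanishing of $c_0$ and hence of $C_{N,\tN,1,0}$ once $A_{N,\tN,1,0}=0$, and the remark that the figure-eight monodromy argument survives the appearance of the double pole at $y=0$) are precisely the content of the paper's two-sentence derivation.

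There is, however, one check you announce but never carry out, and it is exactly the point at which the statement as printed and your argument part company: you say you would ``impose on $\lambda$ the condition $A_{N,\tN,1,0}=0$''. With $\tilde g=(-1,0,0,0)$ one has $|\tilde g|=-1$, so that condition (written in Proposition~\ref{res3} as $2\lambda N-2\tN=\tilde g_0+\tilde g_1+\tilde g_2+\tilde g_3$, equivalently $A_{N,\tN,1,0}=4\lambda N-4\tN-2|\tilde g|=0$) forces $\lambda=(2\tN-1)/(2N)$, whereas the proposition you are proving stipulates $\lambda=(2\tN+1)/(2N)$, for which $A_{N,\tN,1,0}=4\lambda N-4\tN+2=4\neq 0$. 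With the stated $\lambda$ the $\partial/\partial\beta$ term in \Ref{eq} does not drop out, the integral transform produces a solution of a non-stationary equation rather than an eigenfunction, and your substitutions in fact establish the proposition with $\lambda=(2\tN-1)/(2N)$ in place of the stated value. This looks like a sign slip in the paper rather than in your reasoning, but since your entire proof hinges on $A_{N,\tN,1,0}=0$ you must either verify that the stated $\lambda$ satisfies it (it does not) or flag and correct the hypothesis; as written, the proposal silently proves a slightly different statement.
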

 
\section*{Acknowledgments} 
We are grateful to Boris Shapiro for an initiative that led to this collaboration. This work was supported by the Swedish Science Research Council (VR), the G\"oran Gustafsson Foundation and the Japan Society for the Promotion of Science.

\appendix
\section{Elliptic functions}
\label{appA} 

\subsection{Definitions} 
\label{appA1} 
The  Weierstrass $\wp$-function with periods $(2\omega_1, 2\omega_3)$ is defined as follows:
\begin{equation}
 \wp (x)= \frac{1}{x^2}+  \sum_{(m,n)\in \Z \times \Z \setminus \{ (0,0)\} } \left\{ \frac{1}{(x-\Omega_{m,n})^2}-\frac{1}{\Omega_{m,n}^2}\right\}
\end{equation} 
with $\Omega_{m,n}=2m\omega_1 +2n\omega_3$. We also recall the definitions of the corresponding Weierstrass zeta- and sigma-functions, 
\begin{equation*}
 \zeta (x)= \frac{1}{x}+  \sum_{(m,n)\in \Z \times \Z \setminus \{ (0,0)\} } \left\{ \frac{1}{x-\Omega_{m,n}}+ \frac1{\Omega_{m,n}}+\frac{z}{\Omega_{m,n}^2}\right\}
\end{equation*} 
and 
\begin{equation*}
\sigma (x)=x\prod_{(m,n)\in \Z \times \Z \setminus \{(0,0)\} }\left\{ \left(1-\frac{x}{\Omega_{m,n}}\right)
\exp\left(\frac{x}{\Omega_{m,n}}+\frac{x^2}{2\Omega_{m,n}^2}\right)\right\},
\end{equation*} 
respectively. 

We use the following symbols, 
\begin{equation} 
\omega _0=0, \quad \omega_2=-\omega_1-\omega_3,
\end{equation} 
\begin{equation}
 e_{\nu } =\wp(\omega_{\nu }), \; \; \; \eta_{\nu } =\zeta(\omega_{\nu }) \; \; \; \; (\nu =1,2,3), 
\label{enuetanu} 
\end{equation}
\begin{equation}
q=\ee{\pi \ii \tau },\quad \tau =\frac{\omega _3}{\omega _1},\quad R = \frac{2\omega_1}{\pi},\quad \beta= \frac{2\omega_1\omega_3}{\pi\ii}
\end{equation} 
where we regard $\omega_1$ as fixed.  We also need $\theta_\nu(x)=\vartheta_\nu(x/R,q)$ ($\nu=1,2,3,4$)  with the Theta-functions $\vartheta_\nu(x , q)$  as usual \cite{WW}, i.e., 
\begin{equation}
\label{thenu} 
\begin{split} 
\theta _1(x) &= 2\sum _{n=0}^{\infty } (-1)^{n} \ee{\tau \pi \ii (n+1/2)^2} \sin (2n+1)x/R, \\
\theta _2(x) &= 2\sum _{n=0}^{\infty } \ee{\tau \pi \ii (n+1/2)^2} \cos (2n+1)x/R ,\\ 
\theta _3 (x) &= 1+ 2\sum _{n=1}^{\infty } \ee{\tau \pi \ii n ^2} \cos 2n x/R ,\\
\theta _4 (x) &= 1+ 2\sum _{n=1}^{\infty } (-1)^{n-1} \ee{\tau \pi \ii n ^2} \cos 2n x/R.
\end{split}
\end{equation} 
We also define
\begin{equation} 
\label{phinu} 
\phi_{\nu}(x)=\frac{\theta_{\nu}'(x)}{\theta_{\nu}(x)} \; \; \; \; (\nu=1,2,3,4).  
\end{equation} 
Here and in the following we use the following shorthand notation,
\begin{equation} 
\theta_\nu'(x)\equiv \frac{\partial}{\partial x}\theta_\nu(x),\; \; \; \dot\theta_\nu(x)\equiv \frac{\partial}{\partial \beta}\theta_\nu(x) = \frac{2\pi \ii}{\omega_1^2}\frac{\partial}{\partial\tau}\tet_{\nu}(x).
\end{equation} 
 
\subsection{Properties} 
\label{appA2} 
We recall some well-known properties of the Weierstrass elliptic functions that we need (see e.g.\ \cite{WW}, Chapter XX): 
\begin{equation} 
\wp(x)=-\zeta'(x),\; \; \;  \zeta(x)=\frac{\sigma'(x)}{\sigma(x)}, 
\label{wpxi}
\end{equation} 
\begin{equation}
e_1+e_2+e_3 =\eta_1+\eta_2+\eta_3=0 , 
\label{sumrule} 
\end{equation}
\begin{equation}
\wp(x+2\omega_{\nu })=\wp(x), \; \; \; \zeta(x+2\omega_{\nu })=\zeta(x)+2\eta_{\nu } \; \; \; \; (\nu =1,2,3) 
\label{periods} 
\end{equation}
\begin{equation} 
\wp(-x)=\wp(x),\quad \xi(-x)=-\xi(x),\quad \sigma(-x)=-\sigma(x),
\label{Z2} 
\end{equation} 
and
\begin{align}
\bigl(\zeta(x_1) +\zeta (x_2) + \zeta (x_3)\bigr)^2= \wp (x_1) +\wp (x_2) +\wp (x_3) \; \; \; \; (x_1+x_2+x_3=0).  \label{eq:zetawp}
\end{align}
As will be seen, the last identity plays an important role in the proof of our main result. Other identities that we need are the heat equation satisfied by the Theta-functions, i.e., 
\begin{equation}
\theta''_{\nu} (x ) =   2 \dot\theta_{\nu}(x)\; \; \; \;  (\nu=1,2,3,4)
\label{heateq} 
\end{equation} 
(obvious from the definitions), and two identities obtained from the well-known duplication formula   
\begin{equation*}
\theta_1 (2x)= C  \theta _{1} (x ) \theta _{2} (x) \theta _{3} (x) \theta _{4} (x ) 
\end{equation*}
where $C$ is a constant (see e.g.\ \cite{WW}, Example~5 at the end of Chapter XXI), i.e., 
\begin{equation*}
\frac{\theta_{1}'(2x)}{\theta_{1} (2x)} = \frac{1}{2} \left( \frac{\theta _{1} '(x ) }{\theta _{1} (x) } + \frac{\theta _{2} '(x ) }{\theta _{2} (x) } + \frac{\theta _{3} '(x ) }{\theta _{3} (x ) } + \frac{\theta _{4} '(x ) }{\theta _{4} (x ) } \right) 
\end{equation*} 
and
\begin{equation*} 
\frac{\theta_{1} ''(2x)}{\theta_{1}(2x)} = \frac{1}{4} \sum _{\nu =1}^4  \frac{\theta _{\nu } ''(x ) }{\theta _{\nu } (x ) } + \frac{1}{2} \sum _{1 \leq \mu < \nu \leq 4 }  \frac{\theta _{\mu } '(x ) }{\theta _{\mu } (x) }\frac{\theta _{\nu } '(x ) }{\theta_{\nu } (x ) } .
\end{equation*} 
Using \Ref{phinu} and the heat equation we can write these identities as  
\begin{equation} 
\phi_1(2x)=\frac12\sum_{\nu=1}^4\phi_\nu(x) 
\label{double1} 
\end{equation} 
and 
\begin{equation} 
\frac{\dot\theta_1(2x)}{\theta_1(2x)} = \frac14 \sum_{\nu=1}^4\frac{\dot\theta_\nu(x)}{\theta_\nu(x)} + \frac14\sum_{1\leq\mu<\nu\leq 4} \phi_\mu(x)\phi_\nu(x) . \label{double2} 
\end{equation}  

We also need the relations between the functions defined in \Ref{phinu} and the Weierstrass  zeta function:
\begin{equation} 
\phi_1(x) = \zeta(x) -\frac{\eta_1 x}{\omega_1} , \; \; \; \phi_{\nu+1}(x) = \zeta(x+\omega_\nu) -\eta_\nu -\frac{\eta_1 x}{\omega_1} \; \; \; \;  (\nu=1,2,3)
\label{zetanuphinu}
\end{equation} 
(this is a simple consequence of 
\begin{equation*} 
\theta_{1}(x) = C_0 \sigma(x)\exp\Bigl(-\frac{\eta_1 x^2}{2\omega_1} \bigr),\; \; \; 
\theta_{\nu+1}(x) = C_\nu \sigma(x+\omega_\nu)\exp\left(-\eta_\nu x -\frac{\eta_1 x^2}{2\omega_1} \right)
\; \; \; \;  (\nu=1,2,3)
\end{equation*} 
for constants $C_\nu$ ($\nu=0,1,2,3$), which can be obtained by comparing the product representation of the Weierstrass sigma function in \cite{WW}, \S\;20.421, with the product representations of the Theta-functions in \cite{WW}, \S\;21.3). 

\bigskip

In the following we collect several identities needed in the proof of our main result. 

\begin{prop}
The following holds true, 
\begin{equation}
\label{I1} 
\phi_{\nu+1}'(x) = - \wp(x+\omega_\nu) -\frac{\eta_1}{\omega _1}
\end{equation} 
and 
\begin{equation}
\label{I2} 
\phi_{\nu+1}(x)^2 =2\frac{\dot\tet_{\nu+1}(x)}{\tet_{\nu+1}(x)} + \wp(x+\omega_\nu) + \frac{\eta_1}{\omega _1}
\end{equation} 
for $\nu=0,1,2,3$, and 
\begin{equation}
\label{I3} 
\phi_{\nu+1}(x)\phi_{\mu+1}(x) = \frac{\dot\tet_{\nu+1}(x)}{\tet_{\nu+1}(x)}+\frac{\dot\tet_{\mu+1}(x)}{\tet_{\mu+1}(x)} +\frac{\eta_1}{\omega _1} -\frac{e_{\nu,\mu}}{2}
\end{equation} 
for $0\leq \mu<\nu\leq 3$, where 
\begin{equation} 
e_{\nu ,0} =e _{\nu }\quad (\nu=1,2,3), \quad  e_{2,1}=e_3,\quad  e_{3,1} =e_2,\quad  e_{3,2} =e_1.
\label{emunu} 
\end{equation} 
Moreover, 
\begin{equation}
\label{I4} 
\begin{split} 
\phi_1(x-y)\phi_1(x+y) = \frac12\sum_{\nu=0}^3\sum_{r=\pm} \phi_{\nu+1}(x)\phi_1(x-r y) - \sum_{\nu=0}^3\frac{\dot\tet_{\nu+1}(x)}{\tet_{\nu+1}(x)}\\ -\sum_{r=\pm}\frac{\dot\tet_1(x-ry)}{\tet_1(x-ry)}-\frac{3\eta_1}{\omega _1} 
\end{split} 
\end{equation} 
and 
\begin{equation}
\label{I5} 
\sum_{r=\pm}\bigl( \phi_{\nu+1}(x)-r\phi_{\nu+1}(y)\bigr) \phi(x-ry) = \sum_{r=\pm}\Bigl( \frac{\dot\tet_{\nu+1}(x)}{\tet_{\nu+1}(x)}+\frac{\dot\tet_{\nu+1}(y)}{\tet_{\nu+1}(y)}+\frac{\dot\tet_1(x-ry)}{\tet_1(x-ry)}+\frac{3\eta_1}{2\omega _1}\Bigl) 
\end{equation} 
for $\nu=0,1,2,3$, and 
\begin{equation}
\label{I6} 
\begin{split} 
\sum_{r,s=\pm} \Bigl(\phi_1(x-ry)\phi_1(x-sz) + \phi_1(y-rx)\phi_1(y-sz) +\phi_1(z-rx)\phi_1(z-sy)  \Bigr) \\ = 2\sum_{r=\pm} \Bigl( \frac{\dot\tet_1(x-ry)}{\tet_1(x-ry)}+  \frac{\dot\tet_1(x-rz)}{\tet_1(x-rz)} +  \frac{\dot\tet_1(y-rz)}{\tet_1(y-rz)} +\frac{3\eta_1}{2\omega _1} \Bigr) .
\end{split} 
\end{equation} 
\end{prop}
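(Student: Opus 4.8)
The plan is to establish the six identities in the order stated, since each builds on the previous ones, with the Weierstrass addition formula \Ref{eq:zetawp} serving as the main engine throughout. First I would dispatch \Ref{I1} and \Ref{I2} directly. Differentiating the relations \Ref{zetanuphinu} between $\phi_{\nu+1}$ and $\zeta$ and invoking $\wp=-\zeta'$ from \Ref{wpxi} gives \Ref{I1} immediately. For \Ref{I2} I would write $\phi_{\nu+1}=\theta_{\nu+1}'/\theta_{\nu+1}$, so that $\phi_{\nu+1}'=\theta_{\nu+1}''/\theta_{\nu+1}-\phi_{\nu+1}^2$; solving for $\phi_{\nu+1}^2$ and substituting the heat equation \Ref{heateq} in the form $\theta_{\nu+1}''=2\dot\theta_{\nu+1}$ together with \Ref{I1} yields \Ref{I2}. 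For \Ref{I3} I would apply \Ref{eq:zetawp} to the zero-sum triple $x_1=x+\omega_\nu$, $x_2=-(x+\omega_\mu)$, $x_3=\omega_\mu-\omega_\nu$. Using \Ref{zetanuphinu}, its left-hand side collapses to $(\phi_{\nu+1}(x)-\phi_{\mu+1}(x))^2$ — the quasi-period term $\zeta(\omega_\mu-\omega_\nu)$ and the constants $\eta_\nu,\eta_\mu$ cancelling by the sum rule \Ref{sumrule} and quasi-periodicity \Ref{periods} — while the right-hand side is $\wp(x+\omega_\nu)+\wp(x+\omega_\mu)+e_{\nu,\mu}$. Expanding the square, replacing $\phi_{\nu+1}^2$ and $\phi_{\mu+1}^2$ via \Ref{I2}, and cancelling the two $\wp(x+\omega_\cdot)$ terms, I can solve for $\phi_{\nu+1}\phi_{\mu+1}$ to obtain \Ref{I3}.

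The identity \Ref{I4} is the most laborious step, and I expect it to be the main obstacle. I would start from \Ref{eq:zetawp} with $(x_1,x_2,x_3)=(x-y,\,x+y,\,-2x)$, whose left-hand side becomes $(\phi_1(x-y)+\phi_1(x+y)-\phi_1(2x))^2$ once the linear terms cancel, and whose right-hand side is $\wp(x-y)+\wp(x+y)+\wp(2x)$. After expanding and substituting \Ref{I2} for the three squares, the $\phi_1(2x)^2$ contribution is rewritten with \Ref{double2}, which introduces $\sum_{0\le\mu<\nu\le 3}\phi_{\mu+1}(x)\phi_{\nu+1}(x)$; this sum is evaluated by \Ref{I3}, and the crucial simplification is that $\sum_{0\le\mu<\nu\le 3}e_{\nu,\mu}=2(e_1+e_2+e_3)=0$ by \Ref{sumrule} and \Ref{emunu}. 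The remaining cross term $-2\phi_1(2x)\sum_{r=\pm}\phi_1(x-ry)$ is handled by the duplication formula \Ref{double1}. Collecting the various $\eta_1/\omega_1$ constants and solving for $\phi_1(x-y)\phi_1(x+y)$ then reproduces \Ref{I4}; the genuine difficulty here is purely the careful bookkeeping of these constants and of the many $\dot\theta/\theta$ terms.

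The two-point identity \Ref{I5} again rests on a judicious choice of zero-sum triples. I would apply \Ref{eq:zetawp} twice: to $(x+\omega_\nu,\,-(y+\omega_\nu),\,y-x)$ for the $r=+$ term, and to $(x+\omega_\nu,\,y-\omega_\nu,\,-(x+y))$ for the $r=-$ term. In each case the left side reduces, via \Ref{zetanuphinu} and \Ref{periods}, to $(\phi_{\nu+1}(x)\mp\phi_{\nu+1}(y)-\phi_1(x\mp y))^2$. Expanding and using \Ref{I2} for the three squares, each term is left carrying a residual $\mp\phi_{\nu+1}(x)\phi_{\nu+1}(y)$, and the key point is that these opposite-sign cross terms cancel upon summing over $r=\pm$, leaving exactly \Ref{I5}.

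Finally, for \Ref{I6} I would first record, as a direct corollary of \Ref{eq:zetawp} and of \Ref{I2} with $\nu=0$, the chained relation $\phi_1(a)\phi_1(b)+\phi_1(b)\phi_1(c)+\phi_1(c)\phi_1(a)=-\sum\dot\theta_1(\cdot)/\theta_1(\cdot)-3\eta_1/(2\omega_1)$, valid whenever $a+b+c=0$. I would then apply this to the four zero-sum triples $(x-y,\,y-z,\,z-x)$, $(x-y,\,y+z,\,-(x+z))$, $(x+y,\,z-x,\,-(y+z))$ and $(x+y,\,z-y,\,-(x+z))$, using that $\phi_1$ is odd and $\dot\theta_1/\theta_1$ even. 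Summing these four relations, the twelve resulting products reassemble, up to an overall sign, into the left-hand side of \Ref{I6} (one checks that the expansion of the left side of \Ref{I6} factors as $\sum_r\phi_1(x-ry)$ against $\sum_s\phi_1(x-sz)$ and cyclic, matching the collected products), while each of the six arguments $x\pm y$, $x\pm z$, $y\pm z$ occurs in exactly two of the four triples, reproducing the overall factor $2$ and the constant $6\eta_1/\omega_1$ on the right-hand side of \Ref{I6}.
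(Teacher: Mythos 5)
Your proposal is correct and follows essentially the same route as the paper: \Ref{I1}--\Ref{I2} from \Ref{zetanuphinu}, \Ref{wpxi} and the heat equation, and \Ref{I3}--\Ref{I6} from the addition formula \Ref{eq:zetawp} applied to the same zero-sum triples the paper uses (your four triples for \Ref{I6} coincide, after reordering, with the paper's $(x+ry,\,-x-sz,\,-ry+sz)$, and your variant triple for the $r=-$ case of \Ref{I5} differs from the paper's only by a period shift $2\omega_\nu$). The minor repackaging — stating the three-term $\phi_1$ relation as a standalone lemma before summing, and routing $\phi_1(2x)^2$ through \Ref{I2} before invoking \Ref{double2} — does not change the substance of the argument.
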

\begin{proof}
Differentiate \Ref{zetanuphinu} in $x$ and use \Ref{wpxi} to obtain \Ref{I1}. 

Differentiate \Ref{phinu} in $x$ and use \Ref{phinu} again to obtain $\phi_{\nu}(x)^2 = \theta''_{\nu}(x)/\theta_{\nu}(x) - \phi_{\nu}'(x)$. Insert the heat equation in \Ref{heateq} and use \Ref{I1} to obtain \Ref{I2}. 

We show \Ref{I3}. Substitute $x_1= x +\omega _{\nu }$, $x_2=-x- \omega _{\mu }$ and $x_3=\omega _{\mu }-\omega _{\nu }$ in \Ref{eq:zetawp}, and use $\zeta (\omega _{\mu }-\omega _{\nu } ) = \eta _{\mu } -\eta _{\nu} $ and $\wp (\omega _{\mu }-\omega _{\nu } ) = e_{\nu,\mu}$ to obtain 
\begin{equation*}
 \bigl( \zeta (x+\omega _\nu ) -\zeta (x +\omega _\mu ) +\eta _{\mu } -\eta _{\nu}  \bigr)^2 = \wp (x +\omega _{\nu }) +\wp (x+\omega _{\mu }) + e_{\nu,\mu}.
\end{equation*}
Insert into this 
\begin{equation*}
\zeta (x+\omega _\nu ) -\zeta (x +\omega _\mu ) +\eta _{\mu } -\eta _{\nu} = \phi _{\nu +1}(x ) - \phi _{\mu +1} (x )  
\end{equation*} 
(this follows from \Ref{zetanuphinu}), expand the square, insert \Ref{I2}, and obtain an identity equivalent to \Ref{I3}. 

We show \Ref{I4}. Substitute $x_1=x-y$, $x_2=x+y$, $x_3=-2x$ in \Ref{eq:zetawp}, and use  \Ref{phinu} and \Ref{zetanuphinu} to obtain 
\begin{equation*}
\big(\phi_1(x-y) + \phi_1(x+y)-\phi_1(2x)\bigr)^2 = \wp(x-y)+\wp(x+y)+\wp(2x).
\end{equation*}   
Expanding the square and using \Ref{I2} this can be written as 
\begin{equation*} 
\phi_1(x-y)\phi_1(x+y) = \phi_1(2x)\sum_{r=\pm}\phi_1(x-ry) -\frac{\dot\theta_1(2x)}{\theta_1(2x)}-\sum_{r=\pm}\frac{\dot\theta_1(x-ry)}{\theta_1(x-ry)}-\frac{3\eta_1}{2\omega_1}.
\end{equation*} 
Insert into this \Ref{double1} and 
\begin{equation*}
\begin{split}  
\frac{\dot\theta_1(2x)}{\theta_{1}(2x)} = \frac{1}{4}\sum_{\nu=0}^3\frac{\dot\theta_{\nu +1}(x)}{\theta_{\nu +1}(x)}+\frac{1}{4}\sum_{0\leq\mu<\nu\leq 3}\Bigl(\frac{\dot\tet_{\mu +1}(x)}{\tet_{\mu +1}(x)}+\frac{\dot\tet_{\nu +1}(x)}{\tet_{\nu +1}(x)} +\frac{\eta_1}{\omega _1} -\frac{e_{\nu,\mu}}{2}
 \Bigr) 
 \\= \sum_{\nu=0}^3\frac{\dot\theta_{\nu +1}(x)}{\theta_{\nu +1}(x)} +\frac{3\eta_1}{2\omega_1}
\end{split} 
\end{equation*} 
(we used \Ref{double2}, \Ref{I3} and $\sum_{\mu<\nu}e_{\nu,\mu}=2(e_1+e_2+e_3)=0$) to obtain \Ref{I4}.

We show \Ref{I5}. We substitute $x_1= x+\omega _{\nu }$, $x_2=-y-\omega _{\nu }$ and $x_3=-x+y$ in \Ref{eq:zetawp}. Then 
\begin{align*}
 \bigl( \phi_{\nu +1} (x ) - \phi_{\nu +1} (y )  -  \phi _{1} (x-y ) \bigr) ^2 &= \bigl( \zeta (x+\omega _{\nu} ) -\zeta (y+\omega _{\nu}) -\zeta (x-y)\bigr)^2 \\ &= \wp (x +\omega _{\nu }) +\wp (y+\omega _{\nu }) +\wp (x-y)
\end{align*}
using \Ref{zetanuphinu}, and it follows from \Ref{I2} that
\begin{align*}
& \phi _{\nu +1} (x) \phi _{\nu +1} (y) + \phi _{\nu +1} (x) \phi_1(x-y ) -\phi _{\nu +1} (y ) \phi_1(x-y )\\
& = \frac{\dot\tet_{\nu+1}(x)}{\tet_{\nu+1}(x)}+\frac{\dot\tet_{\nu+1}(y)}{\tet_{\nu+1}(y)}+\frac{\dot\tet_1(x-y)}{\tet_1(x-y)}+\frac{3\eta_1}{2\omega _1}. \nonumber
\end{align*}
Similarly we have
\begin{align*}
& -\phi _{\nu +1} (x) \phi _{\nu +1} (y) + \phi _{\nu +1} (x) \phi_1(x+y ) +\phi _{\nu +1} (y ) \phi_1(x+y )\\
& = \frac{\dot\tet_{\nu+1}(x)}{\tet_{\nu+1}(x)}+\frac{\dot\tet_{\nu+1}(y)}{\tet_{\nu+1}(y)}+\frac{\dot\tet_1(x+y)}{\tet_1(x+y)}+\frac{3\eta_1}{2\omega _1}, \nonumber
\end{align*}
by substituting $x_1= x+\omega _{\nu }$, $x_2=y+\omega _{\nu }$ and $x_3=-x-y-2\omega _{\nu }$ in \Ref{eq:zetawp}. Sum up the last two equalities and obtain \Ref{I5}. 

We finally show \Ref{I6}. We substitute $x_1= x+ry$, $x_2=-x-sz$ and $x_3=-ry+sz$, for $r,s=\pm$,  in \Ref{eq:zetawp}. Then 
\begin{equation*} 
\bigl(\phi_1(x+ry) - \phi_1(x+sz) - \phi_1(ry-sz)\bigr)^2 =\wp(x+ry)+\wp(x+sz)+\wp(y-rsz)
\end{equation*} 
and, by using \Ref{I2} as above,  
\begin{equation*}
\begin{split} 
\phi_1(x+ry)\phi_1(x+sz) + \phi_1(y+rx)\phi_1(y-rsz) + \phi_1(z+sx)\phi_1(z-rsy) \\ 
= \frac{\dot\theta_1(x+ry)}{\theta_1(x+ry)} + \frac{\dot\theta_1(x+sz)}{\theta_1(x+sz)} + \frac{\dot\theta_1(y-rsz)}{\theta_1(y-rsz)}+ \frac{3\eta_1}{2\omega_1} .
\end{split} 
\end{equation*} 
Sum the last equality over $r,s=\pm$ to obtain \Ref{I6}. 
\end{proof}

\section{Proof of Source Identity (details)}
\label{appB} 
We compute the functions defined in \Ref{cQ}, 
\begin{equation}
\cV_J = \sum_{\nu} g_{\nu,J}\phi_{\nu+1}(X_J) +\sum_{K\neq J} \sum_{r=\pm}m_Jm_K\lambda \phi_1(X_J-rX_K) , 
\end{equation}
and thus 
\begin{equation}
\label{tcH} 
\tilde{\cH}= -\sum_J \frac1{m_J}\partial_J^2 + \cW,\quad \cW =\sum_J \frac1{m_J}\Bigl( \partial_J\cV_J + \cV_J^2\Bigr)
\end{equation}
with $\partial_J=\partial/\partial X_J$. Here and in the following, we write $\sum_\nu$ short for $\sum_{\nu=0}^3$, $\sum_J$ short for $\sum_{J=1}^{\cN}$, etc. 

We compute $\cW = \cW_1+\cW_2+\cW_3$ with 
\begin{equation}
\begin{split} 
\label{cW1} 
\cW_1 = \sum_J \frac1{m_J}\Biggl\{ \sum_{\nu} \Bigl( g_{\nu,J}\phi'_{\nu+1} (X_J)  + g_{\nu,J}^2\phi_{\nu+1}(X_J)^2\Bigr) \\ + 2\sum_{\nu<\mu}g_{\nu,J}g_{\mu,J}\phi_{\nu+1}(X_J)\phi_{\mu+1}(X_J)\Biggr\} 
\end{split} 
\end{equation} 
the sum of all one-body terms, 
\begin{equation}
\label{cW2}
\begin{split} 
\cW_2 = \sum_J \sum_{K\neq J} \Biggl\{   \sum_{r=\pm} \Biggl(m_K\lambda\phi'_1(X_J-rX_K) + m_Jm_K^2\lambda^2 \phi_1(X_J-rX_K)^2  \\ 
+ 2\sum_{\nu}g_{\nu,J} m_K\lambda\phi_{\nu+1}(X_J)\phi_1(X_J-rX_K)\Biggr)\\ 
+ 2 m_Jm_K^2\lambda^2 \phi_1(X_J-X_K)\phi_1(X_J+X_K) \Biggr\}
\end{split} 
\end{equation} 
the sum of all two-body terms, and
\begin{equation}
\label{cW3} 
\cW_3 = \sum_J\sum_{K\neq J}\sum_{L\neq J,K} \sum_{r,s=\pm }m_Jm_Km_L\lambda^2 \phi_1(X_J-rX_K)\phi_1(X_J-sX_L) 
\end{equation} 
the sum of all three-body terms. 

To simplify the one-body terms we use \Ref{I1}--\Ref{I3} and obtain 
\begin{equation*}
\begin{split} 
\cW_1= \sum_J \frac1{m_J}\Biggl\{ \sum_{\nu}\left( g_{\nu,J}(g_{\nu,J}-1)\left( \wp(z+\omega_\nu) +\frac{\eta_1}{\omega_1} \right) + 2g_{\nu,J}^2\frac{\dot\tet_{\nu+1}(X_J)}{\tet_{\nu+1}(X_J)}\right) \\  + 2\sum_{\mu<\nu} g_{\nu,J} g_{\mu,J} \left(\frac{\dot\tet_{\nu+1}(X_J)}{\tet_{\nu+1}(X_J)}+\frac{\dot\tet_{\mu+1}(X_J)}{\tet_{\mu+1}(X_J)} +\frac{\eta_1}{\omega_1}-\frac{e_{\nu,\mu}}{2} \right) \Biggr\} .
\end{split} 
\end{equation*} 
Changing summations in the last sum we obtain, after some computations,
\begin{equation}
\label{W1r} 
\begin{split} 
\cW_1 = \sum_J \sum_{\nu}\Biggl( \frac1{m_J} g_{\nu,J}(g_{\nu,J}-1)\wp(z+\omega_\nu) + 2g_{\nu,J}(|d|+2m_J\lambda) \frac{\dot\tet_{\nu+1}(X_J)}{\tet_{\nu+1}(X_J)} \Biggl) 
\\ + \sum_J \left( (|d|+2m_J\lambda)(m_J |d| + 2m_J^2\lambda-1)\frac{\eta_1}{\omega_1}- m_J\sum_{\mu<\nu}d_\nu d_\mu e_{\nu,\mu}\right) 
\end{split} 
\end{equation} 
where we used
\begin{equation*} 
 \sum_{\mu}g_{\mu,J} = m_J(|d|+2m_J\lambda),\quad \sum_{\mu<\nu}e_{\nu,\mu}=0,\quad \sum_{\mu<\nu} (d_\mu+d_\nu)e_{\nu,\mu}=0
\end{equation*} 
following from \Ref{gnuJ},  \Ref{sumrule} and \Ref{emunu}. 

To compute the two-body terms we use \Ref{I1}, \Ref{I2} and \Ref{I4} to obtain
\begin{equation*} 
\begin{split} 
\cW_2 = \sum_J \sum_{K\neq J} \Biggl\{   \sum_{r=\pm} \Biggl(m_K\lambda(m_Jm_K\lambda-1)\left(\wp(X_J-rX_K) + \frac{\eta_1}{\omega_1} \right) \\ + \sum_{\nu}( 2g_{\nu,J} m_K\lambda + m_Jm_K^2\lambda^2) \phi_{\nu+1}(X_J)\phi_1(X_J-rX_K)\Biggr)\\ -2m_Jm_K^2\lambda^2\sum_\nu \left(\frac{\dot\theta_{\nu+1}(X_J)}{\theta_{\nu+1}(X_J)} + \frac{3\eta_1}{4\omega_1}\right) \Biggr\}
\end{split} 
\end{equation*} 
where terms $\sum_{r=\pm} \dot\theta_1(X_J-rX_K)/\theta_1(X_J-rX_K)$ are cancelled. Symmetrizing  the terms in the first two lines using $(m_K+m_J)\lambda(m_Jm_K\lambda-1)=\gamma_{JK}$, 
\begin{equation*}
2g_{\nu,J} m_K\lambda + m_Jm_K^2\lambda^2 = m_Jm_K\lambda \bigl(2d_\nu + (m_J+m_K)\lambda\bigr)
\end{equation*} 
and 
\begin{equation*}
\phi_1(X_K-rX_J) = -r\phi_1(X_J-rX_K), 
\end{equation*} 
we can write this as
\begin{equation*}
\begin{split} 
\cW_2 = &\sum_{J<K} \sum_{r=\pm} \Biggl\{ \gamma_{JK} \left(\wp(X_J-rX_K) + \frac{\eta_1}{\omega_1} \right) \\ &+ \sum_{\nu}m_Jm_K\lambda \bigl(2d_\nu + (m_J+m_K)\lambda\bigr)\bigl(\phi_{\nu+1}(X_J)-r\phi_{\nu+1}(X_K)\bigr) \phi_1(X_J-rX_K)\Biggr\}\\ & -\sum_J\sum_{K\neq J}2m_J m_K^2\lambda^2\sum_{\nu}\left( \frac{\dot\theta_{\nu+1}(X_J)}{\theta_{\nu+1}(X_J)} + \frac{3\eta_1}{4\omega_1}\right).
\end{split} 
\end{equation*} 
We insert \Ref{I5} and partly undo the symmetrization, 
\begin{equation*}
\begin{split} 
\cW_2 = &\sum_{J<K} \sum_{r=\pm} \Biggl\{ \gamma_{JK} \left(\wp(X_J-rX_K) + \frac{\eta_1}{\omega_1} \right) \\ & + \sum_{\nu}m_Jm_K\lambda \bigl(2d_\nu + (m_J+m_K)\lambda\bigr)\frac{\dot\theta_1(X_J-rX_K)}{\theta_1(X_J-rX_K)}\Biggr\} \\ & + \sum_J\sum_{K\neq J}\sum_{\nu}\Bigl(2m_Jm_K\lambda \bigl(2d_\nu + (m_J+m_K)\lambda\bigr)- 2m_Jm_K^2\lambda^2 \Bigr)\left( \frac{\dot\theta_{\nu+1}(X_J)}{\theta_{\nu+1}(X_J)}  + \frac{3\eta_1}{4\omega_1}\right) 
\end{split} 
\end{equation*} 
(a factor 2 in the first term in the last line comes from a $r$-sum) and, after some computations, we obtain 
\begin{equation} 
\label{W2r} 
\begin{split} 
\cW_2 = &\sum_{J<K} \sum_{r=\pm} \Biggl\{ \gamma_{JK} \wp(X_J-rX_K) + m_Jm_K\lambda(2|d|+4(m_J+m_K)\lambda) \frac{\dot\theta_1(X_J-rX_K)}{\theta_1(X_J-rX_K)}\Biggr\} \\ & + \sum_J\sum_{\nu}4\lambda (|m|-m_J)g_{\nu,J} \frac{\dot\theta_{\nu+1}(X_J)}{\theta_{\nu+1}(X_J)}  \\ &+ \left(\sum_{J<K}2\gamma_{JK} + \sum_J3\lambda (|m|-m_J) m_J (|d|+2m_J\lambda) \right)\frac{\eta_1}{\omega_1}
\end{split}
\end{equation} 
using the short-hand notation in \Ref{nm}. 

To compute the three-body terms we symmetrize the summations, 
\begin{equation*}
\begin{split} 
\cW_3 = &\sum_{J<K<L}m_Jm_Km_L\lambda^2 
 \sum_{r,s=\pm}\Bigl(\phi_1(X_J-rX_K)\phi_1(X_J-sX_L) \\ &+ \phi_1(X_K-rX_L)\phi_1(X_K-sX_J) +\phi_1(X_L-rX_J)\phi_1(X_L-sX_K)  \Bigr) , 
\end{split} 
\end{equation*} 
which allows us to use the identity \Ref{I6} to obtain 
\begin{equation*} 
\begin{split} 
\cW_3 = \sum_{J<K<L} 2 m_J m_K m_L\lambda^2  \sum_{r=\pm}\Biggl(\frac{\dot\tet_1(X_J-rX_K)}{\tet_1(X_J-rX_K)}+  \frac{\dot\tet_1(X_J-rX_L)}{\tet_1(X_J-rX_L)} +  \frac{\dot\tet_1(X_K-rX_L)}{\tet_1(X_K-rX_L)}+\frac{3\eta_1}{2\omega_1}  \Biggr) .
\end{split} 
\end{equation*} 
Changing summations again we can write this as 
\begin{equation*}
\cW_3=\sum_J\sum_{K\neq J}\sum_{L\neq J,K}  2 m_J m_K m_L\lambda^2  \sum_{r=\pm}\left( \frac{\dot\tet_1(X_J-rX_K)}{\tet_1(X_J-rX_K)}+ \frac{\eta_1}{2\omega_1}\right) ,
\end{equation*} 
and, by simple computations, we obtain 
\begin{equation}
\label{W3r} 
\cW_3 = \sum_{J<K} 4 m_Jm_K(|m|-m_J-m_K)\lambda^2  \sum_{r=\pm}\left( \frac{\dot\tet_1(X_J-rX_K)}{\tet_1(X_J-rX_K)}+ \frac{\eta_1}{2\omega_1}\right) . 
\end{equation} 

Recalling \Ref{tcH} and $\cW = \cW_1+\cW_2+\cW_3$ we add the results in \Ref{W1r}, \Ref{W2r} and \Ref{W3r} and obtain  
\begin{equation*}
\begin{split}
\tilde{\cH} =& \sum_J \frac1{m_J}\left(-\partial_J^2 + \sum_{\nu} g_{\nu,J}(g_{\nu,J}-1)\wp(X_J+\omega_\nu)  \right) +\sum_{J<K}\sum_{r=\pm}\gamma_{J,K}\wp(X_J-rX_K) \\ & +(|d|+2|m|\lambda)\left( \sum_J\sum_{\nu} 2g_{\nu,J}\frac{\dot\theta_{\nu+1}(X_J)}{\theta_{\nu+1}(X_J)}+ \sum_{J<K}\sum_{r=\pm} m_Jm_K\lambda  \frac{\dot\tet_1(X_J-rX_K)}{\tet_1(X_J-rX_K)} \right) -\cE_0
\end{split} 
\end{equation*} 
with
\begin{equation}
\label{cE01}
\begin{split} 
\cE_0 = -\Biggl\{ \sum_J\Bigl((|d|+2m_J\lambda)(m_J|d|+2m_J^2\lambda-1) +3\lambda(|m|-m_J)m_J(|d|+2m_J\lambda) \Bigr)\\ + \sum_{J<K}\Bigl(2\gamma_{JK} + 4m_Jm_K(|m|-m_J-m_K)\lambda^2 \Bigr)  \Biggr\}\frac{\eta_1}{\omega_1} + \sum_Jm_J \sum_{\mu<\nu}d_\nu d_\mu e_{\mu,\nu} 
\end{split} 
\end{equation} 
(a factor 2 in the second term in the second line is from a $r$-sum). Recalling \Ref{Phi0} and \Ref{cH} we can write this as in \Ref{result}. Some computations show that $\cE_0$ in \Ref{cE01} can be simplified to the formula given in \Ref{cE0}.

\end{document}